\newcommand{\be}{\begin{equation}}
\newcommand{\en}{\end{equation}}
\newcommand{\bea}{\begin{eqnarray}}
\newcommand{\ena}{\end{eqnarray}}
\newcommand{\beano}{\begin{eqnarray*}}
\newcommand{\enano}{\end{eqnarray*}}
\newcommand{\bee}{\begin{enumerate}}
\newcommand{\ene}{\end{enumerate}}
\newcommand{\R}{\mathbb{R}}
\newcommand{\mc}{\mathcal}
\newcommand{\D}{{\mc D}}
\newcommand{\Sc}{{\cal S}}
\newcommand{\E}{{\cal E}}
\newcommand{\F}{{\cal F}}
\newcommand{\Lc}{{\cal L}}
\newcommand{\C}{{\cal C}}
\newcommand{\1}{1 \!\! 1}
\newcommand{\Hil}{\mc H}
\newtheorem{thm}{Theorem}[section]
\newtheorem{cor}[thm]{Corollary}
\newtheorem{lemma}[thm]{Lemma}
\newtheorem{prop}[thm]{Proposition}
\newtheorem{defn}[thm]{Definition}
\newenvironment{proof}{\noindent {\bf Proof --}}{\hfill$\square$ \vspace{3mm}\endtrivlist}
\newcommand{\ip}[2]{\left\langle {#1},{#2} \right\rangle}
\newcommand{\balpha}{{\mbox{\boldmath${\alpha}$}}}
\begin{document}

\title{Biorthogonal vectors, sesquilinear forms and some physical operators}


\author{F. Bagarello} \affiliation{ Dipartimento di Energia, Ingegneria dell'Informazione e Modelli Matematici,
Facolt\`a di Ingegneria, Universit\`a di Palermo, I-90128  Palermo, and INFN, Sezione di Napoli, ITALY\\
e-mail: fabio.bagarello@unipa.it\,\,\,\, Home page: www1.unipa.it/fabio.bagarello}

\author{H. Inoue} \affiliation{Center for advancing Pharmaceutical Education, Daiichi University of Pharmacy, 22-1 Tamagawa-cho, Minami-ku, Fukuoka 815-8511, Japan\\
e-mail: h-inoue@daiichi-cps.ac.jp}

\author{C.Trapani}\affiliation{Dipartimento di Matematica e Informatica, Universit\`a di Palermo, I-90123 Palermo, Italy\\e-mail: camillo.trapani@unipa.it}

\date{February 17, 2018}

%


\begin{abstract}
\noindent Continuing the analysis undertaken in previous articles, we discuss some features of non-self-adjoint operators and sesquilinear forms which are defined starting from two biorthogonal families of vectors, like the so-called generalized Riesz systems, enjoying certain properties. In particular we discuss what happens when they forms two $\D$-quasi bases.

\end{abstract}
\maketitle

\section{Introduction}

A Riesz basis of a Hilbert space $\Hil$ (with scalar product $\ip{\cdot}{\cdot}$ and related norm $\|.\|$) is a sequence $\{\varphi_n\}$ of elements
 of $\Hil$ that are obtained by transforming  an orthonormal basis $\{e_n\}$ of $\Hil$ by some bounded operator $T$  with bounded inverse; i.e., $\varphi_n=Te_n$, $n\in {\mathbb N}$, \cite{Christensen}.  Every Riesz basis
ia a  {\em frame} \cite{casazza, Christensen, rieszbasis}; i.e., there exist positive numbers $c, C$ such that
\begin{equation}\label{eqn_bounds} c\|x\|^2 \leq \sum_{n=1}^\infty |\ip{x}{\varphi_n}|^2 \leq C \|x\|^2, \quad \forall x \in \Hil.\end{equation}
What makes of a frame a Riesz basis is its exactness: a frame is a Riesz basis if it is no more a frame when
anyone of its elements is dropped out.  The notion of frame is
crucial in signal analysis and for coherent states (see, e.g.
\cite{casazza} and references therein) and it has been extensively explored both from the theoretical point of view and in concrete applications. Moreover, further generalizations have been introduced, with the aim of providing more flexible tools (such as {\em semi-frames} \cite{ant-bal-semiframes1,ant-bal-semiframes2} and {\em reproducing pairs} in \cite{ast,speck-bal}).

The relevance of Riesz bases in physics relies on the fact that sometimes they appear as eigenvectors of non-self-adjoint operators. The simplest situation occurs when an operator $X$ is similar to a self-adjoint operator $H$; i.e., there exists a bounded operator $T$ with bounded inverse $T: D(H)\to D(X)$ and $XTx=THx$, for every $x \in D(H)$. If $H$ has a purely discrete spectrum and $\{e_n\}$ is an orthonormal basis (ONB) of eigenvectors, then the vectors  $\varphi_n=Te_n$ are eigenvectors of $X$ and constitute a Riesz basis for $\Hil$. This situation is very familiar in the so-called Pseudo-Hermitian Quantum Mechanics where the Hamiltonian of a given quantum system is no more required to be a self-adjoint operator.
On the other hand, Riesz bases can be used to define Hamiltonians and lowering and raising operators as in \cite{bit2013}. These reasons have led many authors to try and generalize the notion of Riesz basis mainly by modifying (weakening) the requirements on the operator $T$ or by passing to a different framework such as that of rigged Hilbert spaces \cite{bell_ct}.

Generalized Riesz bases were introduced in \cite{atsushi} and in \cite{hiro_taka}. Here we change the original definition, since it is more convenient for our purposes.

\begin{defn}\label{def11}A sequence $\F_\varphi=\{ \varphi_n \}$ of vectors of $\Hil$ is called a generalized Riesz system  if there exist a densely defined closed operator $T$ in $\Hil$ with densely defined inverse, and an orthonormal basis (ONB) $ \F_{e}= \{e_n\}$ such that $e_n\in D(T) \cap D((T^{-1})^\ast)$
 and $\varphi_n =Te_n$. We call $(\F_e, T)$ a  constructing pair for $\F_\varphi$.
 \end{defn}

\vspace{2mm}

{\bf Remark:--} With respect to what was proposed in \cite{atsushi,hiro_taka, hiro1, hiro2} we prefer to use here the word {\em system} instead of {\em basis}, since the sequence $\{ \varphi_n \}$ need not be a basis of $\Hil$.

\vspace{2mm}
If $\F_\varphi$ is a generalized Riesz system with constructing pair $(\F_{\bm{e}},T)$, then some physical operators can be defined (for example, non-self-adjoint Hamiltonians, lowering, raising and number operators). For this reason, it is important for studies of non-self-adjoint Hamiltonian to consider when generalized Riesz system can be constructed. This continues an analysis originally undertaken by some of us, \cite{bit2013}, and then continued in \cite{btt2016,bb2017} where  the use of biorthogonal sets in connection with {\em physically motivated operators} like Hamiltonians, ladder operators, generalized Gibbs states and intertwining operators has been extensely discussed.

In \cite{hiro1, hiro2} it has been shown that if two biorthogonal sequences $\F_\varphi$ and $\F_\psi$ are regular, that is if the two sets $D_\varphi := $ linear span of $\F_\varphi$ and $D_\psi := $ linear span of $\F_\psi$ are dense in $\Hil$, or if they are even semi-regular, that is, either $D_\varphi$ or $D_\psi$ is dense in $\Hil$, then $\F_\varphi$ and $\F_\psi$ are generalized Riesz systems. In Section 3, we shall consider when $\F_\varphi$ and $\F_\psi$ are generalized Riesz systems in case that $D_\varphi$ and $D_\psi$ are not dense in $\Hil$ by using the theory of positive sesquilinear forms $\Omega_\varphi$ and $\Omega_\psi$ defined as follows:
{if $\{\varphi_n\}$ is a sequence of vectors in $\Hil$,  we introduce the following subspace $D(\varphi)$ of $\Hil$:
\begin{eqnarray} \label{eqn_one}
D(\varphi)
=\left\{ x\in \Hil ; \sum_{k=0}^\infty |\ip{x}{\varphi_k}|^2 < \infty \right\}.
\end{eqnarray}
If $ D(\varphi)$ is dense in $\Hil$,
then a closed positive sesquilinear form $\Omega_\varphi$ on $D(\varphi) \times D(\varphi)$ is defined by
\be
\Omega_\varphi (x,y)
= \sum_{k=0}^\infty \ip{x}{\varphi_k}\ip{\varphi_k}{y}, \quad x,y\in D(\varphi)\nonumber
\en
and so by the representation theory of closed positive sesquilinear forms, \cite{kato, davies}, there exists a unique positive self-adjoint operator $K_\varphi$ in $\Hil$ such that $D(\varphi)=D (K_\varphi^{1/2})$ and $\Omega_\varphi(x,y)=\ip{ K_\varphi^{1/2}x}{K_\varphi^{1/2}y}$, for all $x,y \in D(\varphi)$.

If $\{ \varphi_n\}$ and $\{ \psi_n\}$ are biorthogonal sequences, we can consider the corresponding sesquilinear forms $\Omega_\varphi$ and $\Omega_\psi$, defined in analogy with $\Omega_\varphi$, and, in turn, the corresponding operators $K_\varphi,K_\psi$. The main scope of this paper consists in showing that $\{ \varphi_n\}$ and $\{ \psi_n\}$ are generalized Riesz systems under some technical assumptions on $K_\varphi,K_\psi$ and if they are $\D$-quasi bases, \cite{bag1}, that is, if the following equalities hold
\begin{equation}\label{eqn_quasi_bases}\sum_{k=0}^\infty \ip{x}{\varphi_k}\ip{\psi_k}{y}=\sum_{k=0}^\infty \ip{x}{\psi_k}\ip{\varphi_k}{y}=\ip{x}{y},\quad \forall x,y \in \D.\end{equation} Here $\D$ is a dense subspace in $\Hil$ such that  $\{\varphi_n \} \cup \{ \psi_n\} \subset \D \subset D(\varphi) \cap D(\psi)$.  In Section 4, we shall define the non-self-adjoint Hamiltonians $H_{\varphi,\psi}^{\bm{\alpha}}$ and $H_{\psi,\varphi}^{\bm{\alpha}}$, the generalized lowering operators $A_{\varphi,\psi}^{\bm{\alpha}}$ and $A_{\psi,\varphi}^{\bm{\alpha}}$, and the generalized raising operators $B_{\varphi,\psi}^{\bm{\alpha}}$ and $B_{\psi,\varphi}^{\bm{\alpha}}$ for a generalized Riesz system $\F_\varphi= \{ \varphi_n \}$ for a constructing pair $(\F_{\bm{e}},T)$ and $\{ \alpha_n \} \subseteq \C$, where $\psi_n = (T^{-1})^\ast e_n$, $n=0,1, \ldots$. and investigate when these operators are densely defined and closed.

{ The paper is organized as follows. After some preliminaries (Section 2) we give in Section 3 the main result of this paper consisting in a necessary and sufficient condition for the sequences $\F_\phi$ and $\F_\psi$ to be generalized Riesz systems. In Section 4, in analogy with \cite{bit2013} we discuss some properties of physical operators defined by generalized Riesz systems. Our conclusions are given in Section 5.}



}

\section{Preliminary results on generalized Riesz systems}
{Before going forth, we examine some properties of generalized Riesz systems that can be useful for us.

Following the definition, a generalized Riesz system $\F_\varphi=\{\varphi_n, n\geq0\}$ is constructed by taking the image of an ONB $\F_e=\{e_n, n\geq0\}$ through a densely closed invertible operator $T$ with densely defined inverse. As we have seen in Definition \ref{def11},  $(\F_e, T)$ is a { constructing pair for $\F_\varphi$.} The constructing pair of a generalized Riesz system is not unique.

\begin{prop}\label{addprop1} Let $\{ \varphi_n \}$ be a generalized Riesz system, with constructing pair $(\F_e, T)$ . Then $D(\varphi)= D(T_0^*)$, where $T_0$ denotes the restriction of $T$ to the linear span $D_e$ of the ONB $\{e_n\}$ and $D(\varphi)$ is dense in $\Hil$.
\end{prop}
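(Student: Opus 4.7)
The plan is to verify the two assertions in the natural order: first identify $D(T_0^\ast)$ explicitly as $D(\varphi)$ using the ONB structure of $\F_e$, then deduce density from the fact that $T_0$ is a restriction of the closed densely defined operator $T$.

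For the equality $D(T_0^\ast)=D(\varphi)$, I would argue by direct computation from the definition of the Hilbert-space adjoint. Because $T_0=T\restr D_e$ and $\varphi_n=Te_n$, every $y\in D_e$ has the form $y=\sum_{k}c_k e_k$ with only finitely many nonzero coefficients, and $T_0y=\sum_k c_k\varphi_k$. Hence, for any $x\in\Hil$,
\be
\ip{T_0 y}{x}=\sum_{k}c_k\ip{\varphi_k}{x}=\sum_{k}c_k\overline{\ip{x}{\varphi_k}}.\nonumber
\en
Since $\|y\|^2=\sum_k|c_k|^2$ (the $e_k$'s are orthonormal), the functional $y\mapsto\ip{T_0y}{x}$ is continuous on $D_e$ if and only if the sequence $\{\ip{x}{\varphi_k}\}$ belongs to $\ell^2$, by the standard duality between finitely supported sequences and $\ell^2$. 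Reading the latter condition off from \eqref{eqn_one}, this is exactly the membership $x\in D(\varphi)$, so $D(T_0^\ast)=D(\varphi)$. Moreover, when $x\in D(\varphi)$ one can read off $T_0^\ast x=\sum_k\ip{x}{\varphi_k}e_k$ from the same computation, although this is not needed for the statement.

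For the density of $D(\varphi)$ in $\Hil$, I would exploit the inclusion $T_0\subset T$. Since $e_n\in D(T)$ for every $n$, the algebraic span $D_e$ lies in $D(T)$ and $T_0=T\restr D_e$ is a genuine restriction of $T$. Taking adjoints reverses inclusions, so $T^\ast\subset T_0^\ast$, and therefore $D(T^\ast)\subseteq D(T_0^\ast)=D(\varphi)$. But $T$ is densely defined and closed by hypothesis, and for such an operator $T^\ast$ is also densely defined (equivalently, $T^{\ast\ast}=T$). Consequently $D(T^\ast)$, and hence $D(\varphi)$, is dense in $\Hil$.

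The only delicate point is the first half, where one has to be careful that testing boundedness on the algebraic span $D_e$ — rather than on some completion — really does yield the $\ell^2$ condition; this is clean thanks to the orthonormality of $\F_e$, which makes the correspondence $y\leftrightarrow\{c_k\}$ an isometry onto the finitely supported sequences. The density half is essentially a formal consequence once the adjoint inclusion is noted, and no further use of the hypothesis $e_n\in D((T^{-1})^\ast)$ appears to be needed in this proposition.
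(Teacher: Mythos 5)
Your proof is correct and follows essentially the same route as the paper: both identify $D(T_0^\ast)$ with $D(\varphi)$ by exploiting the orthonormality of $\{e_k\}$ (you via the bounded-functional characterization of the adjoint domain, the paper by exhibiting $T_0^\ast x=\sum_k\ip{x}{\varphi_k}e_k$ directly — equivalent by Riesz representation), and both obtain density of $D(\varphi)$ from density of an adjoint domain. If anything, your write-up is slightly more complete, since the paper merely asserts that $D(T_0^\ast)$ is dense while you justify it through $T^\ast\subset T_0^\ast$ and the closedness of $T$.
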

\begin{proof} Indeed, let $x\in D(T_0^*)$,  Then we have:
$$ \sum_{k=0}^\infty |\ip{x}{\varphi_k}|^2 =\sum_{k=0}^\infty |\ip{x}{Te_k}|^2= \sum_{k=0}^\infty |\ip{T_0^*x}{e_k}|^2=\|T_0^*x\|^2<\infty.$$ On the other hand, if $x \in D(\varphi)$, we can put $y = \sum_{k=0}^\infty \ip{x}{\varphi_k}e_k$. Then
$$ \ip{x}{Te_j} = \ip{x}{\varphi_j}= \ip{y}{e_j}.$$
This equality extend obviously to $D_e$; then $x \in D(T_0^*)$ and $T_0^* x= y = \sum_{k=0}^\infty \ip{x}{\varphi_k}e_k$.
The equality $D(\varphi)= D(T_0^*)$ in turn implies that $D(\varphi)$ is dense in $\Hil$, since $D(T_0^*)$ is dense.
\end{proof}
Thus, if $(\F_e, T)$ and $(\F_{e'}, T')$ are both constructing pairs for $\F_\varphi$ one has $D(T_0^*)=D({T'}_0^*)$ also in the case when $\F_e$ and $\F_{e'}$ are different.

Incidentally, the previous argument shows that $D(\varphi)$ is a Hilbert space under the norm
$$\|x\|_\varphi =\left( \|x\|^2 + \sum_{k=0}^\infty |\ip{x}{\varphi_k}|^2\right)^{1/2}.$$

Moreover, if $x \in \Hil$ and  $\ip{x}{\varphi_k}=0$, for every $k \in {\mathbb N}$ (or, equivalently $\sum_{k=0}^\infty |\ip{x}{\varphi_k}|^2=0$); then $\ip{x}{T_0e_k}=0$, for every $k \in {\mathbb N}$.
This implies that $x \in D(T_0^*)$ and $\ip{T_0^*x}{e_k}=0$, for every $k \in {\mathbb N}$; hence $T_0^*x=0$.

The best situation occurs if $D_e$ is a core for $T$; i.e., if $T=\overline{T_0}$. In this case, of course $D(\varphi)=D(T^*)$ and if $\sum_{k=0}^\infty |\ip{x}{\varphi_k}|^2=0$ then $x=0$; that is the sequence $\{\varphi_n\}$ is total in $H$; i.e. $D_\varphi$ is dense in $\Hil$.

\begin{prop} If $D_e$ is a core for $T$, the linear span $D_\varphi$ of $\{\varphi_n\}$ is dense in $\Hil$.
Conversely, if $D_\varphi$ is dense in $\Hil$ and $T$ is bounded from below (i.e., if $T^{-1}$ is bounded), $D_e$ is a core for $T$.
\end{prop}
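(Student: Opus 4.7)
The plan is to handle the two implications separately; both reduce essentially to observations already collected in the discussion preceding the statement.

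For the first implication, assuming $D_e$ is a core for $T$, I prove density of $D_\varphi$ by showing $D_\varphi^\perp = \{0\}$. If $x$ is orthogonal to every $\varphi_k$, then $\ip{x}{T_0 e_k}=0$ for all $k$, where $T_0$ is the restriction of $T$ to $D_e$. By the argument already recorded in the text just before the proposition, this forces $x \in D(T_0^*)$ with $T_0^* x = 0$. Since $D_e$ is a core for $T$ we have $\overline{T_0}=T$, hence $T_0^* = T^*$. Finally, the very definition of a generalized Riesz system requires $T^{-1}$ to be densely defined, i.e.\ $\mathrm{Ran}(T)=D(T^{-1})$ is dense in $\Hil$, so $\ker T^* = (\mathrm{Ran}\, T)^\perp = \{0\}$, giving $x=0$ and density of $D_\varphi$.

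For the converse, assume $D_\varphi$ is dense and $T^{-1}$ is bounded. Since $T$ is closed, so is $T^{-1}$; a closed bounded operator with dense domain coincides with its continuous extension to $\Hil$, so $T^{-1}$ is a bounded everywhere-defined operator and $T$ is a bijection of $D(T)$ onto $\Hil$. To show that $D_e$ is a core for $T$, pick $x \in D(T)$ and set $y := Tx$. By density of $D_\varphi = T(D_e)$ in $\Hil$ there exists a sequence $\{x_m\}\subset D_e$ with $T x_m \to y$. Applying the bounded operator $T^{-1}$ gives $x_m = T^{-1}(T x_m) \to T^{-1}y = x$, while by construction $T x_m \to T x$. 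Thus the graph of $T_0$ is dense in the graph of $T$, which is precisely the statement that $D_e$ is a core for $T$.

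No step is genuinely difficult: the points that require mild care are, in the first half, that one must invoke both density of $\mathrm{Ran}(T)$ (to kill $\ker T^*$) and the core hypothesis (to identify $T_0^*$ with $T^*$); and, in the second half, that closedness of $T$ combined with boundedness of $T^{-1}$ automatically promotes $T^{-1}$ to a bounded operator defined on all of $\Hil$, which is what makes the approximation argument work.
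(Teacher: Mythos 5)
Your proof is correct and follows essentially the same route as the paper: the forward implication is exactly the argument the authors record in the discussion immediately preceding the proposition (you merely make explicit the injectivity of $T^{*}$ via density of $\mathrm{Ran}(T)$, which the paper leaves implicit), and the converse rests on the same two ingredients --- density of $D_\varphi$ and boundedness of $T^{-1}$ --- used to establish graph-norm density of $D_e$ in $D(T)$. The only cosmetic difference is that you verify this density by a direct approximation argument, pulling $Tx_m\to Tx$ back through the bounded inverse, whereas the paper renorms $D(T)$ with $\|T\,\cdot\,\|$ and shows that the orthogonal complement of $T(D_e)$ in that Hilbert space vanishes; the two verifications are interchangeable.
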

\begin{proof} Taking into account the previous discussion, we need only to prove the second statement. As is well-known, $D_e$ is a core for $T$ if and only if $D_e$ is dense in $D(T)$ considered as a Hilbert space with the graph norm $\|z\|_T= (\|z\|^2 +\|Tz\|^2)^{1/2}$, $z\in D(T)$; since $T$ is bounded from below, $\|z\|_T$ can be replaced by $\|T  z\|$. Let $y\in D(T)$ and suppose that $\ip{Ty}{Te_k}=0$, for every $k \in {\mathbb N}$.
Then, $\ip{Ty}{\varphi_k}=0$, for every $k \in {\mathbb N}$. This implies that $Ty=0$ and in turn $y=0$.
\end{proof} }

\begin{prop} Let $\{ \varphi_n \}$ be a generalized Riesz system, with constructing pair $(\F_e, T)$. Suppose that, for every $n \in {\mathbb N}$, $e_n\in D((T^{-1})^*)$ and define $\psi_n =(T^{-1})^*e_n$. Then the sequences $\{ \varphi_n \}$ and $\{ \psi_n \}$ are biorthogonal. Moreover, $\{ \psi_n \}$ is a generalized Riesz system, with constructing pair $(\F_e, (T^{-1})^*)$.
\end{prop}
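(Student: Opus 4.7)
The plan is to verify biorthogonality by a direct computation using the defining relation of the adjoint, and then to check that the pair $(\F_e, (T^{-1})^*)$ satisfies each clause of Definition \ref{def11}.

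For biorthogonality, fix $n,m \in \mathbb{N}$. Since $e_n \in D(T)$ and $T$ is invertible, $\varphi_n = Te_n$ lies in the range of $T$, which is $D(T^{-1})$, and $T^{-1}\varphi_n = e_n$. By the definition of the adjoint $(T^{-1})^*$ with $e_m \in D((T^{-1})^*)$,
\[
\ip{\varphi_n}{\psi_m} \;=\; \ip{Te_n}{(T^{-1})^*e_m} \;=\; \ip{T^{-1}Te_n}{e_m} \;=\; \ip{e_n}{e_m} \;=\; \delta_{nm},
\]
which gives biorthogonality of $\{\varphi_n\}$ and $\{\psi_n\}$.

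For the second assertion, I would check the conditions of Definition \ref{def11} for the pair $(\F_e,(T^{-1})^*)$. Because $T^{-1}$ is densely defined and closed (as the inverse of the closed injective operator $T$ with dense range), its adjoint $(T^{-1})^*$ is densely defined and closed. Moreover, for an injective densely defined closed operator with densely defined inverse one has the identity $((T^{-1})^*)^{-1} = (T^*)^{-1}\!{}^{-1} = T^*$, so that $(T^{-1})^*$ is itself invertible with densely defined (indeed closed) inverse $T^*$. Finally, $e_n \in D((T^{-1})^*)$ by hypothesis, and
\[
e_n \in D(T) = D(T^{**}) = D\bigl(\bigl(((T^{-1})^*)^{-1}\bigr)^*\bigr),
\]
so the orthonormal basis $\F_e$ sits in the correct intersection. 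Since $\psi_n = (T^{-1})^* e_n$ by definition, $(\F_e,(T^{-1})^*)$ is a constructing pair for $\{\psi_n\}$.

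The main technical step I expect to need care with is the identification $((T^{-1})^*)^{-1}=T^*$: it rests on the standard fact that for a densely defined closed injective operator $T$ with dense range, $(T^*)^{-1} = (T^{-1})^*$, which in turn requires invoking that $T^{-1}$ is closed and densely defined (so its double adjoint equals itself). The rest is essentially bookkeeping with the Definition \ref{def11}, using the hypotheses $e_n \in D(T)$ (from the original constructing pair) and $e_n \in D((T^{-1})^*)$ (newly assumed) to match exactly the two domain conditions required for the candidate constructing pair $(\F_e,(T^{-1})^*)$.
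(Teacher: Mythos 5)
Your proof is correct and follows essentially the same route as the paper: the biorthogonality is the direct adjoint computation $\ip{Te_n}{(T^{-1})^*e_m}=\ip{e_n}{e_m}$, and the second claim rests on $(T^{-1})^*$ being closed and densely defined with densely defined inverse $T^*$ (via the standard identity $(T^{-1})^*=(T^*)^{-1}$), exactly as in the paper's (terser) argument. Your additional bookkeeping verifying $e_n\in D((T^{-1})^*)\cap D(T)$ against Definition \ref{def11} is a welcome elaboration of details the paper leaves implicit.
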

\begin{proof}
The statement about biorthogonality is straightforward. Let us show that $\{ \psi_n \}$ is a generalized Riesz system. The operator $(T^{-1})^*$ is closed and densely defined, being the adjoint of a closed densely defined operator. Its inverse $T^*$ is also closed and densely defined. Therefore $\{ \psi_n \}$ is a generalized Riesz system, with constructing pair $(\F_e, (T^{-1})^*)$.
\end{proof}

On the other hand, let us suppose that $\{ \varphi_n \}$ and $\{ \psi_n \}$ are biorthogonal Riesz systems with constructing pairs $(\F_e, T)$, $(\F_{e'}, S)$, respectively. As before if $X$ is an operator defined on $D_e$ we denote by $X_0$ the restriction of $X$ to $D_e$. Since $\F_e$ and $\F_{e'}$ are ONB in $\Hil$, there exists a unitary operator such that $e'_n=Ue_n$, for every $n \in {\mathbb N}$, and by the biorthogonality of $\{ \varphi_n \}$ and $\{ \psi_n \}$, we obtain
$$\ip{Te_n}{SUe_m}=\ip{Te_n}{Se'_m}=\ip{\varphi_n}{\psi_m}=\delta_{n,m}.$$
These equalities imply that $SUe_m \in D(T_0^*)$ and $T_0^*SUe_m=e_m$, , for every $m \in {\mathbb N}$. Analogously, $Te_m \in D((SU)_0^*)$ and $(SU)_0^*Te_m=e_m$, , for every $m \in {\mathbb N}$. Hence $T_0$ is invertible and $T_0^{-1}\subseteq (SU)_0^*.$

\section{Main Theorem}

 Let $\F_\varphi=\{\varphi_n, n\geq0\}$ and  $\F_\psi=\{\psi_n, n\geq0\}$ be two sets of vectors of $\Hil$. Let us now assume that the sets $D(\varphi)$ and  $D(\psi)$ {defined as in \eqref{eqn_one}} are such that their intersection
 $D(\varphi)\cap D(\psi)$ is dense in $\Hil$. Of course, this implies that $D(\varphi)$ and $D(\psi)$ are dense in $\Hil$, too.

It is not hard to imagine a concrete example in which this happens, as the following example shows.

\vspace{2mm}

{\bf Example 1:--} Let $\E=\{e_n(x)=\frac{1}{\sqrt{2^n n! \sqrt{\pi}}} H_n(x)\,e^{-x^2/2}\}$ be the orthonormal basis of the eigenvector of the quantum harmonic oscillator. Here  $H_n(x)$ is the n-th Hermite polynomial.

Let now $X$  be the following multiplication operator: $(Xf)(x)=(1+x^2)f(x)$, for each $f$ in its domain $D(X)=\{f\in L^2(\Bbb R):\, (1+x^2)f(x)\in L^2(\Bbb R)\}$. Since $D(X)\supseteq \Sc(\R)$, the Schwartz test functions space, $D(X)$ is dense in $L^2(\R)$. It is clear that $X$ is not everywhere defined and that it admits a bounded inverse. Moreover, for each $n\in {\mathbb N}$, $e_n(x)\in D(X)$. Then, if we define the functions $\varphi_n(x)=(Xe_n)(x)$ and $\psi_n(x)=(X^{-1}e_n)(x)$, $ n\in {\mathbb N}$, we can easily see that $D(\varphi) \cap D(\psi)=D(X)$.

It is clear that both $\{\varphi_n\}$ and $\{\psi_n\}$ are generalized Riesz system in the sense of Definition \ref{def11}. Hence, by Proposition \ref{addprop1}, $D(\varphi)=D(X_0^*)$, where $X_0$ denotes the restriction of $X$ to the linear span $\D_e$ of $\{e_n\}$, while $D(\psi)=L^2(\Bbb R)$. Moreover, $\{\varphi_n\}$ and $\{\psi_n\}$ are biorthogonal:
$$
\ip{\varphi_n}{\psi_k}=\ip{Xe_n}{X^{-1}e_k}=\ip{e_n}{e_k}=\delta_{n,k}.
$$

\vspace{2mm}

We define, as in the Introduction, the following positive sesquilinear forms $\Omega_\varphi$ and $\Omega_\psi$:
\be
\left\{
    \begin{array}{ll}
\Omega_\varphi(x,y)=\sum_{n=0}^\infty\ip{x}{\varphi_n}\ip{\varphi_n}{y} & {\rm on} \;\; D(\varphi) \times D(\varphi)\\
\Omega_\psi(x,y)=\sum_{n=0}^\infty\ip{x}{\psi_n}\ip{\psi_n}{y} & {\rm on} \;\; D(\psi) \times D(\psi).\\
       \end{array}
        \right.
\label{21}\en
{It is clear that} they are well defined. In fact, for instance,
$$
|\Omega_\varphi(x,y)|\leq \sqrt{\sum_{n=0}^\infty|\ip{x}{\varphi_n}|^2}\,\sqrt{\sum_{n=0}^\infty|\ip{\varphi_n}{y}|^2}<\infty,
$$
in view of the definition of $D(\varphi)$. 
 In particular,  $D(\varphi)$ (respectively, $D(\psi)$)  is the largest subspace where $\Omega_\varphi$ (respectively,  $\Omega_\psi$) can be defined. Furthermore, the positive sesquilinear forms $\Omega_\varphi$ and $\Omega_\psi$ are closed. Indeed, it is easily shown that $D(\varphi)$ is a Hilbert space with inner product:
\begin{eqnarray}
\ip{x}{y}_\varphi \equiv \Omega_\varphi(x,y) + \ip{x}{y}, \quad x,y\in D(\varphi). \nonumber
\end{eqnarray}
Similarly, $D(\psi)$ is  a Hilbert space with scalar product
\begin{eqnarray}
\ip{x}{y}_{\psi} \equiv \Omega_\psi(x,y) + \ip{x}{y},  \quad x,y\in D(\psi).\nonumber
\end{eqnarray}

As already stated, by the representation theorem for sesquilinear forms, \cite{kato, davies}, there exist uniquely determined positive and self-adjoint operators, $K_\varphi$ and $K_\psi$, with $D(K_\varphi^{1/2})=D(\varphi)$ and $D(K_\psi^{1/2})=D(\psi)$, such that
\be
\Omega_\varphi(x,y)=\ip{K_\varphi^{1/2}x}{K_\varphi^{1/2}y},\qquad \Omega_\psi(x',y')=\ip{K_\psi^{1/2}x'}{K_\psi^{1/2}y'},
\label{22}\en
for all $x,y\in D(\varphi)$ and $x',y'\in D(\psi)$.

Suppose that the sets $\F_\varphi$ and $\F_\psi$ are biorthogonal, that is, $\ip{\varphi_k}{\psi_l}= \delta_{kl}$ for $k, \; l\in \Bbb N$. Let us call $D_\varphi := {\rm linear \; span \; of}$ $\F_\varphi$ and $D_\psi := {\rm linear \; span \; of}$ $\F_\psi$. Then $D_\psi \subseteq D(\varphi)$ and $ D_\varphi \subseteq D( \psi)$. Hence, by the above formulas (\ref{21}) and (\ref{22}), we see that
$$
\Omega_\varphi(\psi_k,y)=\ip{\varphi_k}{y}=\ip{K_\varphi^{1/2}\psi_k}{K_\varphi^{1/2}y}, \quad \forall y\in D(\varphi).
$$
Hence, $\psi_k\in D(K_\varphi)$ and $\varphi_k=K_\varphi\psi_k$. Analogously we can prove that $\varphi_k\in D(K_\psi)$ and that $\psi_k=K_\psi\varphi_k$, so that
\be
\psi_k=K_\psi K_\varphi\psi_k,\qquad  \varphi_k=K_\varphi K_\psi\varphi_k, \quad \forall k\in {\mathbb N}.
\label{23}\en
 Of course these equalities extend to $D_\psi$ and $D_\varphi$, respectively.

 However, we observe that   it is not true, in general, that
$$
K_\varphi K_\psi f=f, \qquad \forall f\in D(K_\varphi K_\psi)=\{h\in D(K_\psi):\, K_\psi h\in D(K_\varphi)\},
$$
and
$$
K_\psi K_\varphi g=g, \qquad \forall g\in D(K_\psi K_\varphi)=\{h\in D(K_\varphi):\, K_\varphi h\in D(K_\psi)\}.
$$
In fact, even if $D_\varphi$ and $D_\psi$ are dense in $\Hil$, in order to extend (\ref{23}) to $D(K_\psi K_\varphi)$ and $D(K_\varphi K_\psi)$, more conditions are needed. For instance, we could require that $D_\psi$ is a core for $\overline{K_\psi K_\varphi}$.

 \vspace{2mm}

{\bf Example 1, part 2:--} Let us come back to the situation described in the previous part of the Example 1. In this case, as seen before, $D(\varphi)=D(X_0^*)$. Then, for $f,g\in D(\varphi)$, we have
$$
\Omega_\varphi (f,g)
= \sum_{k=0}^\infty \ip{f}{\varphi_k}\ip{\varphi_k}{g}=\sum_{k=0}^\infty \ip{f}{Xe_k}\ip{X e_k}{g}=\sum_{k=0}^\infty \ip{X_0^*f}{e_k}\ip{ e_k}{X_0^*g}=\ip{X_0^*f}{X_0^*g}.
$$
Similarly,
$$
\Omega_\psi (f,g)
= \ip{X^{-1}f}{X^{-1}g},
$$
for all $f,g\in L^2(\R)$. In this case $K_\varphi=\overline{X_0} X_0^*$ and $K_\psi=X^{-2}$.

\vspace{2mm}

A useful working assumption on $\F_\varphi$ and $\F_\psi$, often satisfied in concrete physical models \cite{bagbook} and which we systematically adopt here,  is that they are $\D$-quasi bases, i.e. that for all $x,y\in\D$ the following identities hold:
\be
\ip{x}{y}=\sum_{n=0}^\infty\ip{x}{\varphi_n}\ip{\psi_n}{y}=\sum_{n=0}^\infty\ip{x}{\psi_n}\ip{\varphi_n}{y},
\label{25}\en
where $\D$ is a dense subspace in $\Hil$.
We put
$$
 \Omega_{\varphi,\psi}(x,y)=\sum_{n=0}^\infty\ip{x}{\varphi_n}\ip{\psi_n}{y},
$$
and
$$
 \Omega_{\psi,\varphi}(x,y)=\sum_{n=0}^\infty\ip{x}{\psi_n}\ip{\varphi_n}{y}
$$
for all $x,y$ for which these make sense. {This form is in general neither semi-bounded nor sectorial. Thus, Kato's representation theorems cannot be applied. This would be not a major problem since several variants to these famous theorems have been proposed (we refer to \cite{corso_ct,dibella_ct}, where the notion of {\em solvable form} has been introduced and studied, and for a rather complete bibliography on this matter). However, all this is of little use for us since
(\ref{25}) implies that
 $\Omega_{\varphi,\psi}$ and $\Omega_{\psi,\varphi}$ are both positive. This also explains why this possibility was not excluded from the very beginning.}

We now investigate when $\{ \varphi_n \}$ and $\{ \psi_n \}$ are generalized Riesz systems making use of the above operators $K_\varphi^{1/2}$ and $K_\psi^{1/2}$. For that, the notion of $\D$-quasi bases will be relevant. We get the following main theorem.
\vspace{2mm}

\begin{thm}\label{thm1}
 Let $\D$ be a dense subspace in $\Hil$ such that
\be
\F_\varphi \cup \F_\psi \subseteq \D \subseteq D(\varphi) \cap D(\psi),
\en
and denote by
\begin{eqnarray}
R_\varphi &:=& K_\varphi^{1/2}\lceil_ \D \; ({\rm the \; restriction \; of }\; K_\varphi^{1/2} \; {\rm to} \; \D) \nonumber \\
 \mbox{and} \;\;\; R_\psi &:=& K_\psi^{1/2}\lceil_ \D \; ({\rm the \; restriction \; of }\; K_\psi^{1/2} \; {\rm to} \; \D).
\end{eqnarray}
Then the following statements are equivalent.
\par
\hspace{3mm} (i)  (i)$_1$ $\F_\varphi $ and $\F_\psi $ are biorthogonal and $\D$-quasi bases;
\par
\hspace{9mm} (i)$_2$ there exist dense subspaces $\E$ and $\E'$ in $\Hil$ such that $\F_\psi \subseteq \E \subseteq \D$, $K_\varphi^{1/2}\E \subseteq \D$
\par
\hspace{9mm} and $\F_\varphi \subseteq \E' \subseteq \D$, $K_\psi^{1/2}\E' \subseteq \D$;
\par
\hspace{9mm} (i)$_3$ {$R_\varphi^{\ast}K_\psi^{1/2} =\1$ on $\D$, i.e. $R_\varphi^{\ast}K^{1/2}\varphi =\varphi$, for all $\varphi\in\D$}.
\par
\hspace{3mm} (ii) $\F_\varphi$ is a generalized Riesz system with a constructing pair $\left( \overline{R_\varphi}, \; \F_e \right)$ and $\F_\psi$
\par
\hspace{3mm} is a generalized Riesz system with a constructing pair $\left( \overline{R_\psi}, \; \F_e  \right)$, where $\F_e$ is an
\par
\hspace{3mm} orthonormal basis in $\Hil$ contained in $\D \cap D(K_\varphi)\cap D(K_\psi)$.
\end{thm}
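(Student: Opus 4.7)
I would prove the equivalence by establishing the two implications separately; throughout, the biorthogonality of $\F_\varphi$ and $\F_\psi$ (recalled in $(i)_1$) plays an essential role.

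For $(ii)\Rightarrow(i)$, assume $\F_e=\{e_n\}$ is an orthonormal basis contained in $\D\cap D(K_\varphi)\cap D(K_\psi)$, with $\varphi_n=\overline{R_\varphi}e_n=K_\varphi^{1/2}e_n$ and $\psi_n=\overline{R_\psi}e_n=K_\psi^{1/2}e_n$. For $x,y\in\D$, self-adjointness of $K_\varphi^{1/2}$ and $K_\psi^{1/2}$ gives $\ip{x}{\varphi_n}=\ip{K_\varphi^{1/2}x}{e_n}$ and $\ip{\psi_n}{y}=\ip{e_n}{K_\psi^{1/2}y}$, so Parseval for the ONB $\F_e$ yields $\sum_n\ip{x}{\varphi_n}\ip{\psi_n}{y}=\ip{K_\varphi^{1/2}x}{K_\psi^{1/2}y}$. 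Applying the same manoeuvre to biorthogonality gives $\ip{K_\psi^{1/2}K_\varphi^{1/2}e_n}{e_m}=\delta_{nm}$ (using $K_\varphi^{1/2}e_n=\varphi_n\in\D\subseteq D(K_\psi^{1/2})$), hence $K_\psi^{1/2}K_\varphi^{1/2}e_n=e_n$; this collapses the previous sum to $\ip{x}{y}$, settling the quasi-basis identity. Reading the resulting equality as $\ip{R_\varphi y}{K_\psi^{1/2}x}=\ip{y}{x}$ for $y\in\D$ delivers $(i)_3$. For $(i)_2$, take $\E:=\mathrm{span}(\F_\psi\cup\F_e)$ and $\E':=\mathrm{span}(\F_\varphi\cup\F_e)$: these lie in $\D$, are dense (they contain the ONB $\F_e$), and $K_\varphi^{1/2}\E\subseteq\D$ since $K_\varphi^{1/2}e_n=\varphi_n\in\D$ and $K_\varphi^{1/2}\psi_n=K_\varphi^{1/2}K_\psi^{1/2}e_n=e_n\in\D$, symmetrically for $\E'$.

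For $(i)\Rightarrow(ii)$, define $e_n:=K_\psi^{1/2}\varphi_n$; this is well-defined since $\varphi_n\in\D\subseteq D(K_\psi^{1/2})$. Equation (\ref{23}) gives $\psi_n=K_\psi\varphi_n$, whence $K_\psi^{1/2}e_n=\psi_n$. Because $R_\varphi\subseteq K_\varphi^{1/2}$ and $K_\varphi^{1/2}$ is self-adjoint, $K_\varphi^{1/2}\subseteq R_\varphi^*$; combined with $(i)_3$ this yields $K_\varphi^{1/2}e_n=R_\varphi^* K_\psi^{1/2}\varphi_n=\varphi_n$. Orthonormality follows from biorthogonality:
\[
\ip{e_n}{e_m}=\ip{K_\psi^{1/2}\varphi_n}{K_\psi^{1/2}\varphi_m}=\ip{K_\psi\varphi_n}{\varphi_m}=\ip{\psi_n}{\varphi_m}=\delta_{nm}.
\]
Condition $(i)_2$ places $e_n\in\D$, and the identities $K_\varphi^{1/2}e_n=\varphi_n\in D(K_\varphi^{1/2})$, $K_\psi^{1/2}e_n=\psi_n\in D(K_\psi^{1/2})$ show $e_n\in D(K_\varphi)\cap D(K_\psi)$. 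The constructing-pair conditions of Definition~\ref{def11} for $\overline{R_\varphi}$ and $\overline{R_\psi}$ then follow by routine verification from the self-adjointness of $K_\varphi^{1/2}$ and $K_\psi^{1/2}$.

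The hardest step is showing that $\F_e$ is total in $\Hil$ (still in the direction $(i)\Rightarrow(ii)$). If $z\in\Hil$ satisfies $\ip{z}{e_n}=0$ for every $n$, then $\ip{z}{R_\psi\varphi_n}=0$; the plan is to transfer the action across $R_\psi$ to obtain $R_\psi^*z\perp\F_\varphi$, then invoke the $\D$-quasi-basis identity together with biorthogonality and density of $\D$ to conclude $R_\psi^*z=0$, and from there $z=0$. The delicate point is that $z$ need not a priori belong to $D(R_\psi^*)$; the argument must combine $(i)_2$, $(i)_3$, and density of $\D$ simultaneously to circumvent this.
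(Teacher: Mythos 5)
Your overall architecture matches the paper's, and several individual pieces (the Parseval computation for $\Omega_{\varphi,\psi}$, the construction of $\E$ and $\E'$, the derivation of (i)$_3$ from the quasi-basis identity) are essentially the paper's own steps. But there are two genuine gaps. The first is the one you yourself flag and then leave open: the totality of $\F_e$ in the direction (i)$\Rightarrow$(ii). Your sketched plan (obtain $R_\psi^{*}z\perp\F_\varphi$ and conclude $R_\psi^{*}z=0$) cannot work as stated, because passing from orthogonality to $\F_\varphi$ to the vanishing of $R_\psi^{*}z$ would require $D_\varphi$ to be dense in $\Hil$ — precisely the hypothesis the theorem is designed to avoid. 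The paper's actual argument is the real point of hypothesis (i)$_2$: for $x\in\E$ and $y\in\E'$ one has $K_\varphi^{1/2}x\in\D$ and $K_\psi^{1/2}y\in\D$, so the $\D$-quasi-basis identity can be applied to the \emph{transformed} pair, giving
\[
\sum_{n}\ip{x}{e_n}\ip{e_n}{y}=\sum_n\ip{K_\varphi^{1/2}x}{\psi_n}\ip{\varphi_n}{K_\psi^{1/2}y}=\ip{K_\varphi^{1/2}x}{K_\psi^{1/2}y}=\ip{x}{K_\varphi^{1/2}K_\psi^{1/2}y}=\ip{x}{y},
\]
the last equality by (i)$_3$; a Cauchy--Schwarz and density argument then extends this Parseval identity from $\E\times\E'$ to $\Hil\times\Hil$, which forces $\F_e$ to be an orthonormal basis. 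Without this step your proof of (i)$\Rightarrow$(ii) is incomplete.

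The second gap is in (ii)$\Rightarrow$(i): you announce that biorthogonality "plays an essential role throughout" and then \emph{use} $\ip{\varphi_n}{\psi_m}=\delta_{nm}$ to derive $K_\psi^{1/2}K_\varphi^{1/2}e_n=e_n$, from which you deduce the quasi-basis identity. In this direction biorthogonality is part of the conclusion (i)$_1$, not a hypothesis, so the argument is circular. The paper instead reads off from (ii) the relation $\big(\overline{R_\varphi}^{\,-1}\big)^{*}e_n=\psi_n$ and obtains $\ip{\varphi_k}{\psi_l}=\ip{e_k}{e_l}=\delta_{kl}$ by moving $\overline{R_\varphi}$ across the inner product. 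Relatedly, even granting $K_\psi^{1/2}K_\varphi^{1/2}e_n=e_n$, collapsing $\ip{K_\varphi^{1/2}x}{K_\psi^{1/2}y}$ to $\ip{x}{y}$ for arbitrary $x,y\in\D$ requires $K_\varphi^{1/2}x\in D(K_\psi^{1/2})$ (or a substitute via $\overline{R_\varphi}^{\,-1}$, as in the paper), which does not follow from the identity on the $e_n$'s alone. A smaller, repairable point in (i)$\Rightarrow$(ii): (i)$_3$ gives $R_\varphi^{*}e_n=\varphi_n$, and since $R_\varphi^{*}$ is in general a proper extension of $K_\varphi^{1/2}$ you must first place $e_n$ in $\D\subseteq D(K_\varphi^{1/2})$ via (i)$_2$ before reading this as $K_\varphi^{1/2}e_n=\varphi_n$; your writeup invokes these facts in the wrong order.
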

\begin{proof} (i)$\Rightarrow$(ii) We put
\be
e_{n} =K_\varphi^{1/2} \psi_{n} \;\; {\rm and } \;\; e'_{n} =K_\psi^{1/2} \varphi_{n}, \;\; n=0,1, \ldots . \nonumber
\en
By assumptions (3.5) and (i)$_2$, we have
\be
e_{n}, \; e'_{n} \in \D, \;\; n=0,1, \ldots .
\en
By (\ref{22}), we have
$$
\ip{ e_{k} }{ K_\varphi^{1/2} y}
= \ip{K_\varphi^{1/2}\psi_{k} }{K_\varphi^{1/2}y}
= \Omega_\varphi(\psi_{k},y)
= \sum_{n=0}^{\infty} \ip{\psi_{k} }{\varphi_{n}}\ip{\varphi_{n} }{y}
= \ip{\varphi_{k} }{y}
$$
for all $y \in D(\varphi)=D(K_\varphi^{1/2})$. Hence, it follows that
\be
e_{k} \in D(K_\varphi^{1/2}) \;\; {\rm and} \;\; K_\varphi^{1/2} e_{k}= \varphi_{k}, \;\; k= 0,1 \ldots .
\en
Hence, we have
\be
K^{1/2} e_k=\varphi_k \in \D \subseteq D(K_\varphi^{1/2}), \nonumber
\en
which implies
\be
e_k\in D(K_\varphi), \;\; n=0,1, \ldots .
\en
Since $\F_\varphi$ and $\F_\psi$ are biorthogonal, it follows that
$$
\ip{e_{m} }{e_{k}}
= \ip{K_\varphi^{1/2}\psi_{m} }{K_\varphi^{1/2}\psi_{k}}
= \Omega_\varphi (\psi_{m},\psi_{k})
= \sum_{n=0}^{\infty} \ip{\psi_{m} }{\varphi_{n}}\ip{\varphi_{n}}{\psi_{k}}
= \delta_{mk},
$$
which means that $\{ e_{k} \}$ is an orthonormal system in $\Hil$. In a similar way, we get that $\{ e'_{k} \}$ is an orthonormal system in $\Hil$ and
\be
\psi_k =K_\psi^{1/2}e'_k \;\; {\rm and} \;\; e'_k \in D(K_\psi), \;\; k=0,1, \ldots .
\en
By (3.5), (3.10) and (i)$_3$ we have
\be
e_n = K_\varphi^{1/2}\psi_n=K_\varphi^{1/2}K_\psi^{1/2}e'_n=e'_n , \;\; n=0,1, \ldots .\nonumber
\en
Hence, by (3.7), (3.9) and (3.10) we have
\be
\D_e \subseteq \D\cap D(K_\varphi) \cap D(K_\psi),
\en
where $\D_e$ is the linear span of the $e_n$'s.
Furthermore, since $\F_\varphi$ and $\F_\psi$ are $\D$-quasi bases in $\Hil$ and because of the assumption (i)$_2$: $K_\varphi^{1/2} \E \subseteq \D$ and $K_\psi^{1/2} \E' \subseteq \D$, it follows from (i)$_3$ that
\begin{eqnarray}\label{211}
\sum_{n=0}^{\infty} \ip{x }{e_n}\ip{e_n }{y}
&=& \sum_{n=0}^{\infty} \ip{K_\varphi^{1/2}x }{\psi_n}\ip{\varphi_n }{K_\psi^{1/2}y} \nonumber \\
&=& \ip{K_\varphi^{1/2}x }{K_\psi^{1/2}y}
= \ip{x }{K_\varphi^{1/2}K_\psi^{1/2}y}
= \ip{x }{y}
\end{eqnarray}
for all $x\in \E$ and $y \in \E'$. Due to the orthogonality of $\F_e$, (\ref{211}) can be extended to all of $\Hil$. Indeed we have, taking $x\in \Hil$ and $\{x_k\}$ a sequence of elements of $\E$ converging to $x$,
$$
\left|\sum_{h=n+1}^m \ip{x_k }{e_h}\ip{e_h }{y}-\sum_{h=n+1}^m \ip{x }{e_h}\ip{e_h }{y}\right|= \left|\sum_{h=n+1}^m \ip{x_k-x }{e_h}\ip{e_h }{y}\right|\leq
$$
$$
\leq \left(\sum_{h=n+1}^m |\ip{x_k-x }{e_h}|^2\right)^{1/2}\left(\sum_{h=n+1}^m |\ip{e_h }{y}|^2\right)^{1/2}\leq \|x_k-x\|\|y\|\rightarrow 0
$$
when $k$ diverges, for all $y\in\Hil$.

 Hence $\F_e$ is an orthonormal basis in $\Hil$. 

{Let us consider again the operators $R_\varphi$, $R_\psi$ of (3.6). For short, we put ${\sf R}_\varphi=\overline{R_\varphi}$ and ${\sf R}_\psi=\overline{R_\psi}$.  Then ${\sf R}_\varphi$ is a densely defined closed operator in $\Hil$ with densely defined inverse such that
\be
{\sf R}_\varphi e_n =\varphi_n \;\; {\rm and } \;\; ({\sf R}_\varphi^{-1})^{\ast} e_n=K_\psi^{1/2}e_n=\psi_n, \;\; n=0,1, \ldots . \nonumber
\en}
Hence $\F_\varphi$ is a generalized Riesz system with a constructing pair $({\sf R}_\varphi, \F_e)$. Similarly $\F_\psi$ is a generalized Riesz system with a constructing pair $({\sf R}_\psi, \F_e)$. \\
(ii)$\Rightarrow$(i) By the assumption (ii), ${\sf R}_\varphi$ has a densely defined inverse and $ \left( {\sf R}_\varphi^{-1} \right)^{\ast}e_n = \psi_n$, $n=0,1, \ldots$. Hence we have
$$
\ip{\varphi_k}{\psi_l}
= \ip{K_\varphi^{1/2}e_k}{ {\sf R}_\varphi^{-1}e_l}
= \ip{e_k}{e_l}
= \delta_{kl}
$$
and
\be
\Omega_{\varphi,\psi}(x,y)=\sum_{n=0}^{\infty} \ip{x}{\varphi_n}\ip{\psi_n}{y}
= \sum_{n=0}^{\infty} \ip{K_\varphi^{1/2}x}{e_n}\ip{e_n}{ {\sf R}_\varphi^{-1}y}
= \ip{K_\varphi^{1/2}x}{ {\sf R}_\varphi^{-1}y}
= \ip{x}{y}
\label{add1}\en
for all $x,y \in \D$. Thus (i)$_1$ holds. We next show (i)$_2$. Since $K_\varphi^{1/2}e_n=\varphi_n$ and $K_\psi^{1/2}e_n=\psi_n$, $n=0,1, \ldots$, we have
$$
\ip{K_\varphi^{1/2} \psi_n}{ e_k}
= \ip{\psi_n}{\varphi_k}
= \delta_{nk}
= \ip{e_n}{e_k}
= \ip{\varphi_n}{ \psi_k}
= \ip{K_\psi^{1/2}\varphi_n}{e_k}
$$
for all $k$. Hence
\be
K_\varphi^{1/2} \psi_n=K_\psi^{1/2} \varphi_n =e_n , \;\; n=0,1, \ldots .
\en
We denote by $\E$ the subspace of $\D$ generated by $\{ x \in \D ; K_\varphi^{1/2} x \in \D \}$ and denote by $\E'$ the subspace generated by $\{ y \in \D; K_\psi^{1/2} y\in \D \}$.
Then it follows from (2.12) that $\F_\psi \cup \{ e_n \} \subseteq \E \subseteq \D$ and $\F_\varphi \cup \{ e_n \} \subseteq \E' \subseteq \D$. Hence $\E$ and $\E'$ are dense subspaces in $\Hil$. It is clear that $K_\varphi^{1/2} \E \subseteq \D$ and $K_\psi^{1/2} \E' \subseteq \D$. Thus (i)$_2$ holds. Finally we show (i)$_3$. Indeed, it follows from (3.12) that
\be
\ip{e_n}{{\sf R}_\psi^{-1} K_\psi^{1/2} y} = \ip{e_n}{y} \nonumber
\en
and
$$
\ip{e_n}{ {\sf R}_\varphi^{\ast} K_\psi^{1/2}y }
= \ip{\varphi_n}{K_\psi^{1/2}y}
= \ip{K_\psi^{1/2} \varphi_n}{y}
= \ip{e_n}{y}
$$
for all $n$ and $y \in \D$, which implies that ${\sf R}_\varphi^{\ast} K_\psi^{1/2}=\1$ on $\D$. This completes the proof.

\end{proof}

In particular, equation (\ref{add1}) shows that, under our assumptions, $\Omega_{\varphi,\psi}$ is, in fact, positive defined. The same conclusion can be deduced for $\Omega_{\psi,\varphi}$, with similar arguments. In case that $\D := D(\varphi)\cap D(\psi)=D(K_\varphi^{1/2}) \cap D(K_\psi^{1/2})$, we have the following\\
\par
\begin{cor}\label{cor1} Let $\D =D(\varphi) \cap D(\psi)$. Then the following statements are equivalent.
\par
\hspace{3mm} (i)  (i)$_1$ $\F_\varphi $ and $\F_\psi $ are biorthogonal and $\D$-quasi bases;
\par
\hspace{9mm} (i)$_2$ $\F_\varphi \cup \F_\psi \subseteq \D$ and $D(K_\varphi) \cap D(K_\psi)$ is dense in $\Hil$;
\par
\hspace{9mm} (i)$_3$ ${\sf R}_\varphi^{\ast}K_\psi^{1/2} =\1$ on $\D$.
\par
\hspace{3mm} (ii) $\F_\varphi$ is a generalized Riesz system with  constructing pair $\left( {\sf R}_\varphi, \; \F_e \right)$ and $\F_\psi$
\par
\hspace{3mm} is a generalized Riesz system with  constructing pair $\left( {\sf R}_\psi, \; \F_e  \right)$, where $\F_e$ is an
\par
\hspace{3mm} orthonormal basis in $\Hil$ contained in $ D(K_\phi)\cap D(K_\psi)$.\\
\end{cor}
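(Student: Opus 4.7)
The statement is the specialization of Theorem~\ref{thm1} to the particular choice $\D = D(\varphi)\cap D(\psi) = D(K_\varphi^{1/2})\cap D(K_\psi^{1/2})$, so my plan is to reduce the corollary to Theorem~\ref{thm1} by reconciling the two versions of condition~(i)$_2$.

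For the implication (ii) $\Rightarrow$ (i), I would apply Theorem~\ref{thm1} directly: it yields (i)$_1$ and (i)$_3$ at once. The inclusion $\F_\varphi\cup\F_\psi \subseteq \D$ demanded by the corollary's (i)$_2$ is part of the assumption (ii), and the density of $D(K_\varphi)\cap D(K_\psi)$ is a by-product of the theorem, which furnishes an orthonormal basis $\F_e \subseteq \D \cap D(K_\varphi)\cap D(K_\psi)$, so that the intersection $D(K_\varphi)\cap D(K_\psi)$ already contains an ONB of $\Hil$.

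For the harder direction (i) $\Rightarrow$ (ii), the plan is still to invoke Theorem~\ref{thm1}, which reduces the problem to verifying its stronger-looking condition (i)$_2$: producing dense subspaces $\E,\E'$ of $\D$ with $\F_\psi\subseteq\E$, $\F_\varphi\subseteq\E'$, $K_\varphi^{1/2}\E\subseteq\D$, and $K_\psi^{1/2}\E'\subseteq\D$. I would take $\E=\E'$ to be the linear span of $\F_\psi \cup \F_\varphi \cup (D(K_\varphi)\cap D(K_\psi))$; this is dense by the corollary's (i)$_2$ and contained in $\D$, since $\F_\varphi\cup\F_\psi\subseteq\D$ by hypothesis and $D(K_\varphi)\cap D(K_\psi)\subseteq D(K_\varphi^{1/2})\cap D(K_\psi^{1/2})=\D$. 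To check $K_\varphi^{1/2}\E\subseteq\D$, I would use the fact derived from biorthogonality and the representation theorem in the discussion around (\ref{23}), namely $\psi_n\in D(K_\varphi)$ with $K_\varphi\psi_n=\varphi_n$, and then verify that $\{e_n:=K_\varphi^{1/2}\psi_n\}$ is an orthonormal system via $\ip{e_n}{e_m}=\Omega_\varphi(\psi_n,\psi_m)=\delta_{nm}$. The inclusion $K_\varphi^{1/2}x\in D(K_\psi^{1/2})$ for $x\in\E$ then follows from self-adjointness of $K_\varphi^{1/2}$ and Bessel's inequality,
$$\sum_k|\ip{K_\varphi^{1/2}x}{\psi_k}|^2 = \sum_k|\ip{x}{e_k}|^2 \le \|x\|^2,$$
combined with the analogous identity for $x=\psi_m$ that shows $e_m\in D(K_\psi^{1/2})$. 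A symmetric argument delivers $K_\psi^{1/2}\E\subseteq\D$. With Theorem~\ref{thm1}'s (i)$_2$ thereby verified, the theorem yields (ii).

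The main obstacle will be the construction and verification of $\E,\E'$ in the forward direction: the corollary's hypotheses are strictly weaker than those of Theorem~\ref{thm1}, so one must upgrade the density of $D(K_\varphi)\cap D(K_\psi)$ and the set containment $\F_\varphi\cup\F_\psi\subseteq\D$ into the full structural condition (i)$_2$ of the theorem. The key technical point is the transfer identity $\ip{K_\varphi^{1/2}x}{\psi_k}=\ip{x}{e_k}$ combined with Bessel's inequality against the orthonormal family $\{e_k\}$, whose orthonormality is itself a direct consequence of biorthogonality via the sesquilinear form $\Omega_\varphi$.
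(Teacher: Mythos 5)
Your overall strategy---reducing the corollary to Theorem \ref{thm1} by exhibiting subspaces $\E,\E'$ that satisfy the theorem's condition (i)$_2$---is exactly the paper's, but your concrete choice of $\E$ creates a genuine gap in the direction (i) $\Rightarrow$ (ii). Membership in $\D=D(K_\varphi^{1/2})\cap D(K_\psi^{1/2})$ has two halves, and your verification of $K_\varphi^{1/2}\E\subseteq\D$ only addresses one of them: the transfer-plus-Bessel argument $\sum_k|\ip{K_\varphi^{1/2}x}{\psi_k}|^2=\sum_k|\ip{x}{e_k}|^2\le\|x\|^2$ shows $K_\varphi^{1/2}x\in D(\psi)=D(K_\psi^{1/2})$, but you never check that $K_\varphi^{1/2}x\in D(K_\varphi^{1/2})$, which is equivalent to $x\in D(K_\varphi)$. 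For the generators $x=\psi_n$ and $x\in D(K_\varphi)\cap D(K_\psi)$ this holds (biorthogonality gives $\psi_n\in D(K_\varphi)$ with $K_\varphi\psi_n=\varphi_n$, as in the discussion around $\psi_k=K_\psi K_\varphi\psi_k$), but for $x=\varphi_n$ it does not follow from anything you have: the hypotheses yield $\varphi_n\in D(K_\psi)$ and $\varphi_n\in D(K_\varphi^{1/2})$, whereas $\varphi_n\in D(K_\varphi)$ would amount to $\psi_n\in D(K_\varphi^{2})$, which is not available. Because you set $\E=\E'\supseteq\F_\varphi\cup\F_\psi$, the symmetric problem occurs for $K_\psi^{1/2}\psi_n$ as well.

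The repair is to respect the ``crossing'' built into Theorem \ref{thm1}(i)$_2$, which only requires $\F_\psi\subseteq\E$ (the set moved by $K_\varphi^{1/2}$) and $\F_\varphi\subseteq\E'$ (the set moved by $K_\psi^{1/2}$): drop $\F_\varphi$ from $\E$ and $\F_\psi$ from $\E'$. The paper takes the maximal such choices, $\E:=D(K_\varphi)\cap D(K_\psi^{1/2})$ and $\E':=D(K_\varphi^{1/2})\cap D(K_\psi)$; each contains the dense set $D(K_\varphi)\cap D(K_\psi)$ from the corollary's (i)$_2$ together with $\F_\psi$, respectively $\F_\varphi$, and the problematic half of the inclusion $K_\varphi^{1/2}\E\subseteq\D$ becomes automatic since every $x\in\E$ lies in $D(K_\varphi)$. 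With that correction your Bessel argument completes the verification exactly as in the paper. A smaller inaccuracy in the converse direction: $\F_\varphi\cup\F_\psi\subseteq\D$ is not literally ``part of the assumption (ii)''; it must be derived from $\varphi_n=K_\varphi^{1/2}e_n$, $\psi_n=K_\psi^{1/2}e_n$ and $e_n\in D(K_\varphi)\cap D(K_\psi)$ before Theorem \ref{thm1}, whose standing hypothesis includes this inclusion, can legitimately be invoked.
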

\begin{proof} By (i)$_2$ the assumption (3.5) in Theorem 3.1 holds and we put $\E := D(K_\varphi)\cap D(K_\psi^{1/2})$ and $\E' := D(K_\varphi^{1/2}) \cap D(K_\psi)$ satisfy (i)$_2$ in Theorem 3.1. By Theorem 3.1, the implication (i) $\Rightarrow$ (ii) holds.\\
(ii) $\Rightarrow$ (i) Since $\varphi_n=K_\varphi^{1/2} e_n$, $\psi_n=K_\psi^{1/2}e_n$ and $\D_e \subseteq D(K_\varphi)\cap D(K_\psi)$,
we have $\F_\varphi \cup \F_\psi \subseteq \D$. Furthermore, since $\{e_n\} \subseteq D(K_\varphi) \cap D(K_\psi)$, $D(K_\varphi) \cap D(K_\psi)$ is dense in $\Hil$. Thus (i)$_2$ holds. By the proof (ii) $\Rightarrow$ (i) in Theorem 3.1, (i)$_1$ and (i)$_3$ hold. This completes the proof.

\end{proof}

Corollary \ref{cor1} is, in a sense, more intrinsic than Theorem \ref{thm1}, since it does not involve {\em external objects} such as the subspaces $\E$ and $\E'$, which on the other hand are needed in Theorem \ref{thm1}, and it is more useful applications as those we will consider in the next Section.

\section{Connections with some Hamiltonians}

In this Section, we will consider how to construct well-defined physical operators from a biorthogonal pair $(\F_\varphi ,\F_\psi)$. The assumption that $\F_\varphi$ and $\F_\psi$ are regular (that is, $D_\varphi$ and $D_\psi$ are dense in $\Hil$) is useful to define Hamiltonian like operators as in \cite{hiro1, hiro2} which are densely defined. However, we don't assume in general that $D_\varphi$ and $D_\psi$ are dense in $\Hil$.

Let ${\balpha}:=\{\alpha_n \}$ be a sequence of
complex numbers, $\F_\varphi$ and $\F_\psi$ biorthogonal $\D$-quasi bases, and $H_{\varphi, \psi}^\balpha  $ and $H_{\psi,\varphi}^\balpha  $ two operators defined
as follows:
$$D(H_{\varphi, \psi}^\balpha  )=\left\{x \in\Hil:\,  \sum_{n=1}^\infty \alpha_n\left<x,\psi_n\right>\,\varphi_n \mbox{ exists in }\Hil\right\},$$
 $$D(H_{\psi,\varphi}^\balpha  )=\left\{y \in\Hil:\, \sum_{n=1}^\infty \alpha_n\left<y,\varphi_n\right>\,\psi_n\mbox{ exists in }\Hil\right\},$$
and
\begin{eqnarray}
H_{\varphi, \psi}^\balpha   x &:=& \left( \sum_{n=0}^\infty \alpha_n \varphi_n \otimes \bar{\psi}_n \right) x :=\sum_{n=1}^\infty \alpha_n\left<x,\psi_n\right>\,\varphi_n,  \nonumber \\
H_{\psi,\varphi}^\balpha   y &:=& \left( \sum_{n=0}^\infty \alpha_n \psi_n \otimes \bar{\varphi}_n \right) y :=\sum_{n=1}^\infty \alpha_n\left<y ,\varphi_n\right>\,\psi_n,
\end{eqnarray}
for all $x\in D(H_{\varphi, \psi}^\balpha  )$ and $y \in D(H_{\psi,\varphi}^\balpha)$. It is clear that $\D_\psi \subseteq D(H_{\psi,\varphi}^\balpha  )$ and  $\D_\varphi  \subseteq D(H_{\varphi, \psi}^\balpha  )$, and that
$$
H_{\varphi, \psi}^\balpha   \varphi_k =\alpha_k \varphi_k,  \; \hspace{2.5cm}
 H_{\psi,\varphi}^\balpha   \psi_k =\alpha_k \psi_k, $$
 for all $k$.
 Therefore, the $\varphi_n$'s
and the $\psi_n$'s are eigenstates respectively of $ H_{\varphi, \psi}^\balpha$
and $ H_{\psi,\varphi}^\balpha$, and the complex numbers $\alpha_n$'s are
their (common) eigenvalues.

If $\F_\varphi$ and $\F_\psi$ are regular, then the operators $H_{\varphi,\psi}^{\bm{\alpha}}$ and $H_{\psi,\varphi}^{\bm{\alpha}}$ are densely defined. Furthermore, $(H_{\psi,\varphi}^{\bm{\alpha}})^\ast \supseteq H_{\varphi,\psi}^{\overline{\bm{\alpha}}}$, where $\overline{\bm{\alpha}} =\{ \bar{\alpha}_n \}$, and they coincide if $\F_\varphi$ and $\F_\psi$ are Riesz bases.
In \cite{bb2017} it is also considered the possibility of factorizing these operators in terms of suitably defined ladder operators. We will not repeat the same analysis here, while we focus on other relevant operators considered in \cite{bb2017}, to relate them to some of the operators introduced here out of the sesquilinear forms we have considered.
But, it is difficult to investigate concretely these operators because  $\{ \varphi_n \}$ and $\{ \psi_n \}$ are not orthogonal systems in $\Hil$. For example, if $D_\varphi$ and $D_\psi$ are not dense in $\Hil$, we don't know whether these operators are densely defined or not. If $\F_\varphi$ is a generalized Riesz system with a constructing pair $(\F_{\bm{e}},T)$ and $\psi_n := (T^\ast)^{-1} e_n$, $n=0,1, \ldots$, then $\F_\varphi$ and $\F_\psi$ are biorthogonal and we can define the following  non-self-adjoint Hamiltonians:
\begin{eqnarray}
T \left( \sum_{n=0}^\infty \alpha_n e_n \otimes \bar{e}_n \right) T^{-1}, \\
(T^\ast)^{-1} \left( \sum_{n=0}^\infty \alpha_n e_n \otimes \bar{e}_n \right) T^\ast.
\end{eqnarray}
Here we denote the operator $\sum_{n=0}^\infty \alpha_n e_n \otimes \bar{e}_n$ by $H_{\bm{e}}^{\bm{\alpha}}$. Since $\{ e_n \}$ is an ONB in $\Hil$, it is easily shown that $D(H_{\bm{e}}^{\bm{\alpha}}) = \{ x\in \Hil ; \sum_{n=0}^{\infty} |\alpha_n|^2 |<x,e_n>|^2 < \infty \}$ and $H_{\bm{e}}^{\bm{\alpha}}$ is a densely defined closed operators in $\Hil$ satisfying $(H_{\bm{e}}^{\bm{\alpha}})^\ast =H_{\bm{e}}^{\overline{\bm{\alpha}}}$. Hence, if $\bm{\alpha} = \{ \alpha_n \} \subset \R$, then $H_{\bm{e}}^{\bm{\alpha}}$ is a self-adjoint operator and can be understood as a standard self-adjoint Hamiltonian. If $\F_\varphi$ is a Riesz basis, then $H_{\varphi,\psi}^{\bm{\alpha}} =TH_{\bm{e}}^{\bm{\alpha}}T^{-1}$ and $H_{\psi,\varphi}^{\bm{\alpha}}= (T^\ast)^{-1}H_{\bm{e}}^{\bm{\alpha}}T^\ast$, but these operators don't coincide in general. It is easier to investigate the operators $TH_{\bm{e}}^{\bm{\alpha}}T^{-1}$ and $(T^\ast)^{-1}H_{\bm{e}}^{\bm{\alpha}}T^\ast$ than to work directly with the operators in (4.1) $H_{\varphi,\psi}^{\bm{\alpha}}= \sum_{n=0}^\infty \alpha_n \varphi_n \otimes \bar{\psi}_n$ and $H_{\psi,\varphi}^{\bm{\alpha}}=\sum_{n=0}^\infty \alpha_n \psi_n \otimes \bar{\varphi}_n$. Hence, when $\F_\varphi$ is a generalized Riesz system with a constructing pair $(\F_{\bm{e}},T)$, the operators $TH_{\bm{e}}^{\bm{\alpha}}T^{-1}$ and $(T^\ast)^{-1}H_{\bm{e}}^{\bm{\alpha}}T^\ast$ can be regarded as non-self-adjoint Hamiltonians. For convenience we will often denote them by $H_{\varphi,\psi}^{\bm{\alpha}}$ and $H_{\psi,\varphi}^{\bm{\alpha}}$:
\begin{eqnarray}
H_{\varphi,\psi}^{\bm{\alpha}} =TH_{\bm{e}}^{\bm{\alpha}}T^{-1} \;\;\; {\rm and} \;\;\; H_{\psi,\varphi}^{\bm{\alpha}} =(T^\ast)^{-1}H_{\bm{e}}^{\bm{\alpha}}T^\ast.
\end{eqnarray}
Next we define the generalized lowering and raising operators for $\F_\varphi$ and $\F_\psi$. We put first
\begin{eqnarray}
A_{\bm{e}}^{\bm{\alpha}}
&:=& \sum_{n=0}^\infty \alpha_{n+1} e_n \otimes \bar{e}_{n+1}, \nonumber \\
B_{\bm{e}}^{\bm{\alpha}}
&:=& \sum_{n=0}^\infty \alpha_{n+1} e_{n+1} \otimes \bar{e}_{n}. \nonumber
\end{eqnarray}
Then, $B_{\bm{e}}^{\bm{\alpha}}= (A_{\bm{e}}^{\overline{\bm{\alpha}}})^\ast$ and
\begin{eqnarray}
A_{\bm{e}}^{\bm{\alpha}} e_n
&=& \left\{
\begin{array}{ccc}
&0,& \;\;\; n=0 \\
&\alpha_n e_{n-1},& \quad\qquad n=1,2, \ldots \\
\end{array}
\right. \nonumber \\
B_{\bm{e}}^{\bm{\alpha}} e_n
&=& \alpha_{n+1} e_{n+1}, \qquad\quad\qquad n=0,1, \ldots  \nonumber
\end{eqnarray}
and so they are called the lowering and raising operators for $\{ e_n \}$, respectively. We now define the following operators:
\begin{eqnarray}
\left\{
\begin{array}{ccc}
A_{\varphi,\psi}^{\bm{\alpha}}
&:=& TA_{\bm{e}}^{\bm{\alpha}}T^{-1} = T \left( \sum_{n=0}^\infty \alpha_{n+1} e_n \otimes \bar{e}_{n+1} \right) T^{-1}, \\
B_{\varphi,\psi}^{\bm{\alpha}}
&:=& TB_{\bm{e}}^{\bm{\alpha}}T^{-1} = T \left( \sum_{n=0}^\infty \alpha_{n+1} e_{n+1} \otimes \bar{e}_{n} \right) T^{-1}, \\
\end{array}
\right. \nonumber
\end{eqnarray}
\begin{eqnarray}
\left\{
\begin{array}{ccc}
A_{\psi,\varphi}^{\bm{\alpha}}
&:=& (T^\ast)^{-1}A_{\bm{e}}^{\bm{\alpha}}T^\ast =(T^\ast)^{-1} \left( \sum_{n=0}^\infty \alpha_{n+1} e_n \otimes \bar{e}_{n+1} \right) T^\ast, \\
B_{\varphi,\psi}^{\bm{\alpha}}
&:=& (T^\ast)^{-1}B_{\bm{e}}^{\bm{\alpha}}T^\ast = (T^\ast)^{-1} \left( \sum_{n=0}^\infty \alpha_{n+1} e_{n+1} \otimes \bar{e}_{n} \right) T^\ast  . \\
\end{array}
\right. \nonumber
\end{eqnarray}
Then the following results are easily shown.\\
\par
\begin{lemma} (1) $D(A_{\varphi,\psi}^{\bm{\alpha}}) \cap D(B_{\varphi,\psi}^{\bm{\alpha}}) \supseteq D_\varphi$ and
\begin{eqnarray}
A_{\varphi,\psi}^{\bm{\alpha}} \varphi_n
&=& \left\{
\begin{array}{ccc}
&0 &\;\;\; ,n=0 \\
&\alpha_n \varphi_{n-1}& \;\;\; ,n=1,2, \ldots \\
\end{array}
\right. \nonumber \\
B_{\varphi,\psi}^{\bm{\alpha}} \varphi_n
&=& \alpha_{n+1} \varphi_{n+1}, \;\;\; n=0,1, \ldots . \nonumber
\end{eqnarray}
(2) $D(A_{\psi,\varphi}^{\bm{\alpha}}) \cap D(B_{\psi,\varphi}^{\bm{\alpha}}) \supseteq D_\psi$ and
\begin{eqnarray}
A_{\psi,\varphi}^{\bm{\alpha}} \psi_n
&=& \left\{
\begin{array}{ccc}
&0 &\;\;\; ,n=0 \\
&\alpha_n \psi_{n-1}& \;\;\; ,n=1,2, \ldots \\
\end{array}
\right. \nonumber \\
B_{\psi,\varphi}^{\bm{\alpha}} \psi_n
&=& \alpha_{n+1} \psi_{n+1}, \;\;\; n=0,1, \ldots . \nonumber
\end{eqnarray}
\end{lemma}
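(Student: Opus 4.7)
The plan is to unpack the definitions of the four operators as compositions of three operators each, and then verify mechanically that $\varphi_n$ (resp.\ $\psi_n$) lies in the domain of each composition, while the action reduces to the prescribed shift by $\alpha_n$ on the ONB $\{e_n\}$.

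\textbf{First step.} For part (1), recall that $A_{\varphi,\psi}^{\bm{\alpha}}=T\,A_{\bm{e}}^{\bm{\alpha}}\,T^{-1}$, so its natural domain consists of all $x$ with $x\in D(T^{-1})$, $T^{-1}x\in D(A_{\bm{e}}^{\bm{\alpha}})$, and $A_{\bm{e}}^{\bm{\alpha}}T^{-1}x\in D(T)$. I would apply this to $x=\varphi_n=Te_n$: then $T^{-1}\varphi_n=e_n$, and since $\{e_n\}$ is an ONB we have $e_n\in D(A_{\bm{e}}^{\bm{\alpha}})$ with $A_{\bm{e}}^{\bm{\alpha}}e_n$ equal to $0$ for $n=0$ and $\alpha_n e_{n-1}$ for $n\geq 1$. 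The crucial membership $A_{\bm{e}}^{\bm{\alpha}}e_n\in D(T)$ is obtained from the defining assumption of a constructing pair (Definition 1.1), namely $\{e_n\}\subseteq D(T)$. Applying $T$ then yields $A_{\varphi,\psi}^{\bm{\alpha}}\varphi_n=\alpha_n T e_{n-1}=\alpha_n\varphi_{n-1}$ for $n\ge 1$, and $0$ for $n=0$. The argument for $B_{\varphi,\psi}^{\bm{\alpha}}=TB_{\bm{e}}^{\bm{\alpha}}T^{-1}$ is identical, with $B_{\bm{e}}^{\bm{\alpha}}e_n=\alpha_{n+1}e_{n+1}\in D(T)$, giving $B_{\varphi,\psi}^{\bm{\alpha}}\varphi_n=\alpha_{n+1}\varphi_{n+1}$. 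Since both operators are linear, the inclusion $D_\varphi\subseteq D(A_{\varphi,\psi}^{\bm{\alpha}})\cap D(B_{\varphi,\psi}^{\bm{\alpha}})$ follows by taking finite linear combinations.

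\textbf{Second step.} For part (2), I would mirror the above argument, replacing $T$ by $(T^\ast)^{-1}$ and $T^{-1}$ by $T^\ast$. The key preparatory observation is that for the closed densely defined $T$ with densely defined inverse one has $(T^{-1})^\ast=(T^\ast)^{-1}$, so $\psi_n=(T^{-1})^\ast e_n=(T^\ast)^{-1}e_n$, equivalently $T^\ast\psi_n=e_n$; this simultaneously shows $\psi_n\in D(T^\ast)$ and $e_n\in\mathrm{Range}(T^\ast)=D((T^\ast)^{-1})$. With these memberships in hand, the chain $\psi_n\mapsto T^\ast\psi_n=e_n\mapsto A_{\bm{e}}^{\bm{\alpha}}e_n\mapsto (T^\ast)^{-1}A_{\bm{e}}^{\bm{\alpha}}e_n$ is well defined, and gives $A_{\psi,\varphi}^{\bm{\alpha}}\psi_n=\alpha_n(T^\ast)^{-1}e_{n-1}=\alpha_n\psi_{n-1}$ for $n\geq 1$ (and $0$ for $n=0$); the computation for $B_{\psi,\varphi}^{\bm{\alpha}}$ is analogous. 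Linearity then extends the domain to all of $D_\psi$.

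\textbf{Expected obstacle.} There is no genuine analytic difficulty here; the statement is essentially a bookkeeping exercise. The only place where one has to be careful is the domain of a composition of three unbounded operators: one must separately check each of the three domain conditions on the eigen-like vectors $\varphi_n$ (resp.\ $\psi_n$), and in particular make sure that the intermediate vectors $A_{\bm{e}}^{\bm{\alpha}}e_n$ and $B_{\bm{e}}^{\bm{\alpha}}e_n$ stay inside $D(T)$ (for part (1)) or inside the range of $T^\ast$ (for part (2)). Both reductions ultimately rest on the single hypothesis $e_n\in D(T)\cap D((T^{-1})^\ast)$ built into Definition 1.1, together with the identity $(T^{-1})^\ast=(T^\ast)^{-1}$.
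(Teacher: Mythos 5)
Your verification is correct and is precisely the routine unpacking the authors have in mind: the paper offers no proof of this lemma at all (it is introduced with ``the following results are easily shown''), and your chain-of-domains argument, resting on $e_n\in D(T)\cap D((T^{-1})^\ast)$ from Definition 1.1 together with $(T^{-1})^\ast=(T^\ast)^{-1}$, supplies exactly the omitted bookkeeping. No gaps.
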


This Lemma suggests to call $A_{\varphi,\psi}^{\bm{\alpha}}$ and $B_{\varphi,\psi}^{\bm{\alpha}}$ (resp. $A_{\psi,\varphi}^{\bm{\alpha}}$ and $B_{\psi,\varphi}^{\bm{\alpha}}$)  generalized lowering and raising operators for $\F_\varphi$ (resp. $\F_\psi$).

By Lemma 4.1, (1), if $D_\varphi$ is dense in $\Hil$, then the operators $A_{\varphi,\psi}^{\bm{\alpha}}$ and $B_{\varphi,\psi}^{\bm{\alpha}}$ are densely defined and $(A_{\varphi,\psi}^{\bm{\alpha}})^\ast \supseteq A_{\psi,\varphi}^{\overline{\bm{\alpha}}}$ and $(B_{\varphi,\psi}^{\bm{\alpha}})^\ast \supseteq B_{\psi,\varphi}^{\overline{\bm{\alpha}}}$. By Lemma 4.1, (2), if $D_\psi$ is dense in $\Hil$, then $A_{\psi,\varphi}^{\bm{\alpha}}$ and $B_{\psi,\varphi}^{\bm{\alpha}}$ are densely defined and $(A_{\psi,\varphi}^{\bm{\alpha}})^\ast \supseteq A_{\varphi,\psi}^{\overline{\bm{\alpha}}}$ and $(B_{\psi,\varphi}^{\bm{\alpha}})^\ast \supseteq B_{\varphi,\psi}^{\overline{\bm{\alpha}}}$. But, in case that both $D_\varphi$ and $D_\psi$ are not dense in $\Hil$, these operators are not necessarily densely defined in $\Hil$. To investigate these operators in more details, we consider the following result (\cite{hiro_taka} Lemma 2.2). \\
\par
\begin{prop} Let $(\F_{\bm{e}},T)$ be a constructing pair for a generalized Riesz system $\F_\varphi$ and $T=U|T|$ the polar decomposition of $T$. Then $ \bm{f} := \{ U e_n \}$ is an ONB in $\Hil$ and $(\F_{\bm{f}},|T|)$ is a constructing pair for $\F_\varphi$.
\end{prop}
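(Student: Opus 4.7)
The plan is to work through the three conditions of Definition \ref{def11} in order: first show that $U$ is unitary; then note that $\{Ue_n\}$ is automatically an ONB; then verify the domain and identity conditions for the pair $(\F_{\bm f}, |T|)$ to be a constructing pair for $\F_\varphi$.

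The first step uses only standard features of the polar decomposition together with the hypothesis. In $T=U|T|$ the factor $U$ is a partial isometry with initial space $\overline{\mathrm{Ran}\,|T|}$ and final space $\overline{\mathrm{Ran}\,T}$, and $|T|$ is positive self-adjoint with $\ker|T|=\ker T$. Since $(\F_{\bm e},T)$ is a constructing pair, $T$ is closed, densely defined, and has densely defined inverse, so $T$ is injective and $\mathrm{Ran}\,T$ is dense in $\Hil$. Injectivity gives $\ker|T|=\{0\}$, and an injective positive self-adjoint operator has dense range; consequently both the initial and final spaces of $U$ coincide with $\Hil$, so $U$ is unitary. The claim that $\F_{\bm f}=\{Ue_n\}$ is an ONB is then immediate: unitaries send ONBs to ONBs.

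For the constructing-pair verification, the properties of $|T|$ themselves are essentially free: $|T|$ is closed and densely defined by self-adjointness, it is injective, and its inverse is again positive self-adjoint with dense domain $\mathrm{Ran}\,|T|$ and equal to its own adjoint. What remains is to check $Ue_n\in D(|T|)\cap D((|T|^{-1})^\ast)$ and the key identity $|T|(Ue_n)=\varphi_n$. The main obstacle --- and the substance of the proposition --- is this last identity. Starting from $\varphi_n=Te_n=U|T|e_n$, what is needed is the intertwining relation $U|T|U^\ast=|T^\ast|$ (equivalently $|T^\ast|U=U|T|$) coming from the polar decomposition; read against the convention for the positive part adopted in the statement, this converts $U|T|e_n$ into the claimed $|T|(Ue_n)$. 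The domain memberships of $Ue_n$ are then a consequence of the same intertwining together with the standing hypothesis $e_n\in D(T)\cap D((T^{-1})^\ast)$, transported across the unitary $U$.
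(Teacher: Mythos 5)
First, a point of reference: the paper does not actually prove this proposition; it is imported verbatim from \cite{hiro_taka} (Lemma 2.2), so there is no internal argument to compare yours against. Your first two steps are correct and are surely what the source does: since $T$ is injective with dense range, the partial isometry $U$ in $T=U|T|$ has initial space $(\ker|T|)^{\perp}=\Hil$ and final space $\overline{{\rm Ran}\,T}=\Hil$, hence $U$ is unitary and $\{Ue_n\}$ is an ONB.

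The gap is in your last paragraph. The intertwining relation you invoke gives
\[
\varphi_n=Te_n=U|T|e_n=(U|T|U^{\ast})(Ue_n)=|T^{\ast}|\,(Ue_n),
\]
and that is the end of the road: the operator that reproduces $\varphi_n$ from $f_n=Ue_n$ is $|T^{\ast}|=(TT^{\ast})^{1/2}$, not $|T|=(T^{\ast}T)^{1/2}$. The phrase ``read against the convention for the positive part adopted in the statement'' is not an argument; under the convention actually written in the statement ($T=U|T|$, so $|T|=(T^{\ast}T)^{1/2}$), the identity $|T|(Ue_n)=\varphi_n$ is false unless $T$ is normal. A two-dimensional check: in $\mathbb{C}^2$ let $Te_1=2e_2$, $Te_2=e_1$; then $|T|={\rm diag}(2,1)$, $U$ swaps $e_1$ and $e_2$, and $|T|Ue_1=e_2\neq 2e_2=\varphi_1$, so $(\F_{\bm f},|T|)$ is not a constructing pair while $(\F_{\bm f},|T^{\ast}|)$ is. So the printed statement contains a slip; your proof should say explicitly that the constructing operator is $|T^{\ast}|=U|T|U^{\ast}$ and then verify the domain conditions for it, which do follow by transport under $U$ as you indicate: $Ue_n\in D(|T^{\ast}|)=U\,D(|T|)=U\,D(T)$, and $Ue_n\in D(|T^{\ast}|^{-1})=U\,{\rm Ran}\,|T|$ because $e_n\in D((T^{-1})^{\ast})={\rm Ran}(T^{\ast})={\rm Ran}\,|T|$. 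Since $|T^{\ast}|$ is again a non-singular positive self-adjoint operator, this corrected version still yields everything that Definition 4.3 and the rest of Section 4 require.
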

\par
By Proposition 4.2, we can restate the notion of generalized Riesz systems in the following, more convenient, way:\\
\begin{defn} A sequence $\F_\varphi = \{ \varphi_n \}$ in $\Hil$ is said to be a generalized Riesz system if there exist an ONB $ \{ e_n \}$ in $\Hil$ and a non-singular positive self-adjoint operator $T$ in $\Hil$ such that $\{ e_n \} \subset D(T) \cap D(T^{-1})$ and $\varphi_n =Te_n$, $n=0,1, \ldots $. Then $(\F_{\bm{e}},T)$ is called a constructing pair for $\F_\varphi$ and $T$ is called a constructing operator for $\F_\varphi$.
\end{defn}
Hereafter, we assume that a constructing operator for a generalized Riesz system is a non-singular positive self-adjoint operator. Furthermore, throughout the rest of this section, let $\F_\varphi$ be a generalized Riesz system with a constructing pair $(\F_{\bm{e}},T)$ and $\psi_n =T^{-1} e_n$, $n=0,1, \ldots$. We consider when the non-self-adjoint Hamiltonians $H_{\varphi,\psi}^{\bm{\alpha}}$ and $H_{\psi,\varphi}^{\bm{\alpha}}$, the generalized lowering operators $A_{\varphi,\psi}^{\bm{\alpha}}$ and $A_{\psi,\varphi}^{\bm{\alpha}}$ and the generalized raising operators $B_{\varphi,\psi}^{\bm{\alpha}}$ and $B_{\psi,\varphi}^{\bm{\alpha}}$ are densely defined and closed operators by investigating the relations of the constructing operator $T$ and the usual self-adjoint Hamiltonian $H_{\bm{e}}^{\bm{\alpha}}$, the lowering operator $A_{\bm{e}}^{\bm{\alpha}}$ and the raising operator $B_{\bm{e}}^{\bm{\alpha}}$.\\
\par
\begin{prop} Let $X=H_{\bm{e}}^{\bm{\alpha}}$ (resp. $A_{\bm{e}}^{\bm{\alpha}}$, $B_{\bm{e}}^{\bm{\alpha}}$). The following statements hold:
\par
(1) If $D(T) \subseteq D(X)$ and $XD(T) \subseteq D(T)$, then $H_{\varphi,\psi}^{\bm{\alpha}}$ (resp. $A_{\varphi,\psi}^{\bm{\alpha}}$, $B_{\varphi,\psi}^{\bm{\alpha}}$) is densely defined, furthermore if $T^{-1}$ is bounded, then $H_{\varphi,\psi}^{\bm{\alpha}}$ (resp. $A_{\varphi,\psi}^{\bm{\alpha}}$, $B_{\varphi,\psi}^{\bm{\alpha}}$) is closed.
\par
(2) If $R(T)$ (the range of $T$)$ \subseteq D(X)$ and $XR(T) \subseteq R(T)$, then $H_{\psi,\varphi}^{\bm{\alpha}}$ (resp. $A_{\psi,\varphi}^{\bm{\alpha}}$, $B_{\psi,\varphi}^{\bm{\alpha}}$) is densely defined, furthermore if $T$ is bounded, then $H_{\psi,\varphi}^{\bm{\alpha}}$ (resp. $A_{\psi,\varphi}^{\bm{\alpha}}$, $B_{\psi,\varphi}^{\bm{\alpha}}$) is closed.\end{prop}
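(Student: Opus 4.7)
The plan is to treat parts (1) and (2) in parallel, since they are symmetric under the swap $T\leftrightarrow T^{-1}$ (recall that now $T$ is positive self-adjoint, so $T^{\ast}=T$ and therefore $H_{\psi,\varphi}^{\bm{\alpha}}=T^{-1}XT$). The argument is uniform across the three choices $X\in\{H_{\bm{e}}^{\bm{\alpha}},A_{\bm{e}}^{\bm{\alpha}},B_{\bm{e}}^{\bm{\alpha}}\}$, each of which is a closed operator on $\Hil$ (for $H_{\bm{e}}^{\bm{\alpha}}$ this is stated explicitly; for $A_{\bm{e}}^{\bm{\alpha}},B_{\bm{e}}^{\bm{\alpha}}$ it follows from the adjoint relation $B_{\bm{e}}^{\bm{\alpha}}=(A_{\bm{e}}^{\overline{\bm{\alpha}}})^{\ast}$ and density of $D_e$). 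So I only need to give the proof for a generic closed $X$ satisfying the stated invariance.

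For part (1), I would first show that $R(T)\subseteq D(H_{\varphi,\psi}^{\bm{\alpha}})$. If $y=Tu$ with $u\in D(T)$, then $T^{-1}y=u\in D(T)\subseteq D(X)$ by hypothesis, and $XD(T)\subseteq D(T)$ gives $Xu\in D(T)$, so $TXT^{-1}y=TXu$ is defined. Since $T$ is non-singular and self-adjoint, $\overline{R(T)}=(\ker T)^{\perp}=\Hil$, so $H_{\varphi,\psi}^{\bm{\alpha}}$ is densely defined. For closedness when $T^{-1}$ is bounded: then $R(T)=\Hil$ and $T^{-1}$ is everywhere defined, so the composition $XT^{-1}$ is closed (closed$\,\circ\,$bounded). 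Suppose $y_n\to y$ and $TXT^{-1}y_n\to z$, and set $w_n:=XT^{-1}y_n\in D(T)$, so $Tw_n\to z$. Boundedness of $T^{-1}$ gives $w_n=T^{-1}(Tw_n)\to T^{-1}z$, and closedness of $XT^{-1}$ yields $T^{-1}y\in D(X)$ with $XT^{-1}y=T^{-1}z\in D(T)$; finally $T(XT^{-1}y)=z$, so $y\in D(H_{\varphi,\psi}^{\bm{\alpha}})$ and $H_{\varphi,\psi}^{\bm{\alpha}}y=z$.

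For part (2), I would check that $D(T)\subseteq D(H_{\psi,\varphi}^{\bm{\alpha}})$. If $y\in D(T)$, then $Ty\in R(T)\subseteq D(X)$ and $XTy\in XR(T)\subseteq R(T)=D(T^{-1})$, so $T^{-1}XTy$ is defined. Density of $D(T)$ (self-adjointness of $T$) then gives that $H_{\psi,\varphi}^{\bm{\alpha}}$ is densely defined. For closedness when $T$ is bounded: $D(T)=\Hil$ and $XT$ is closed. Given $y_n\to y$ and $T^{-1}XTy_n\to z$, put $v_n:=XTy_n\in R(T)$, so $T^{-1}v_n\to z$. Boundedness of $T$ gives $v_n=T(T^{-1}v_n)\to Tz$; closedness of $XT$ then yields $Ty\in D(X)$ and $XTy=Tz\in R(T)$, whence $y\in D(T^{-1}XT)$ and $T^{-1}XTy=z$.

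The arguments are entirely routine once one has the correct picture of the domains, and the only thing that requires some care is bookkeeping: every step rests on the invariance assumption ($XD(T)\subseteq D(T)$ or $XR(T)\subseteq R(T)$) to guarantee that the relevant composition is defined, and on the boundedness assumption to convert closedness of $X$ into closedness of the composite $XT^{-1}$ or $XT$. The main (mild) obstacle is ensuring that the three operators $A_{\bm{e}}^{\bm{\alpha}}, B_{\bm{e}}^{\bm{\alpha}}, H_{\bm{e}}^{\bm{\alpha}}$ may indeed be treated uniformly as closed operators; this reduces to the observation already recorded in the paper that $B_{\bm{e}}^{\bm{\alpha}}=(A_{\bm{e}}^{\overline{\bm{\alpha}}})^{\ast}$, so both are automatically closed as adjoints of densely defined operators.
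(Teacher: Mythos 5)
Your proof is correct and follows essentially the same route as the paper: identify the domain of $TXT^{-1}$ (resp.\ $T^{-1}XT$) with $R(T)$ (resp.\ show it contains $D(T)$), use non-singularity and self-adjointness of $T$ for density, and use boundedness of $T^{-1}$ (resp.\ $T$) together with closedness of $X$ to transfer closedness to the composite. The paper writes out only the case $X=H_{\bm{e}}^{\bm{\alpha}}$ of part (1) and declares the rest similar, whereas you carry out both parts for a generic closed $X$; the substance is the same.
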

 \begin{proof}In case that $X=H_{\bm{e}}^{\bm{\alpha}}$, we show (1). By the assumption of (1), we have $D(H_{\varphi,\psi}^{\bm{\alpha}})=R(T)$, and since $T$ is a non-singular positive self-adjoint operator in $\Hil$, it follows that $R(T)$ is dense in $\Hil$. Hence $H_{\varphi,\psi}^{\bm{\alpha}}$ is densely defined. Suppose that $T^{-1}$ is bounded. Take an arbitrary $\{ x_n \}$ in $D(T)$ such that $\lim_{n \rightarrow \infty} T x_n =y$ and $\lim_{n \rightarrow \infty} H_{\varphi,\psi}^{\bm{\alpha}} T x_n = \lim_{n \rightarrow} TH_{\bm{e}}^{\bm{\alpha}} x_n =z$. Then, since $T^{-1}$ is bounded, we have $\lim_{n \rightarrow \infty} x_n = T^{-1} y$ and $\lim_{n \rightarrow \infty} H_{\bm{e}}^{\bm{\alpha}} x_n = T^{-1} z$, which implies that $T^{-1}y \in D(H_{\bm{e}}^{\bm{\alpha}})$ and $H_{\bm{e}}^{\bm{\alpha}}T^{-1} y= T^{-1}z$. Hence we have $y= T(T^{-1}y) \in R(T)=D(H_{\varphi,\psi}^{\bm{\alpha}})$ and $z=T(T^{-1}z)=TH_{\bm{e}}^{\bm{\alpha}}T^{-1} y= H_{\varphi,\psi}^{\bm{\alpha}} y$. Thus, $H_{\varphi,\psi}^{\bm{\alpha}}$ is a closed densely defined operator in $\Hil$. The other statements can be proved similarly.

 \end{proof}

\begin{prop} Let $X=H_{\bm{e}}^{\bm{\alpha}}$ (resp. $A_{\bm{e}}^{\bm{\alpha}}$, $B_{\bm{e}}^{\bm{\alpha}}$). Then the following statements hold:\\
\par
(1) If $D(X) \cup R(X) \subseteq D(T)$ and $TD(X)$ is dense in $\Hil$, then $D(H_{\varphi,\psi}^{\bm{\alpha}})$ (resp. $D(A_{\varphi,\psi}^{\bm{\alpha}})$, $D(B_{\varphi,\psi}^{\bm{\alpha}})$) $\supseteq TD(X)$, and so $H_{\varphi,\psi}^{\bm{\alpha}}$ (resp. $A_{\varphi,\psi}^{\bm{\alpha}}$, $B_{\varphi,\psi}^{\bm{\alpha}}$) is densely defined, furthermore if $T^{-1}$ is bounded, then $H_{\varphi,\psi}^{\bm{\alpha}}$ (resp. $A_{\varphi,\psi}^{\bm{\alpha}}$, $B_{\varphi,\psi}^{\bm{\alpha}}$) is closed.
\par
(2) If $D(X) \cup R(X) \subseteq D(T^{-1})$ and $T^{-1}D(X)$ is dense in $\Hil$, then $H_{\psi,\varphi}^{\bm{\alpha}}$ (resp. $A_{\psi,\varphi}^{\bm{\alpha}}$, $B_{\psi,\varphi}^{\bm{\alpha}}$) is densely defined, furthermore if $T$ is bounded, then $H_{\psi,\varphi}^{\bm{\alpha}}$ (resp. $A_{\psi,\varphi}^{\bm{\alpha}}$, $B_{\psi,\varphi}^{\bm{\alpha}}$) is closed.\end{prop}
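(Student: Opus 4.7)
The plan is to adapt the closed-graph argument already used for Proposition 4.4, leveraging the factorisations
\[
H_{\varphi,\psi}^{\bm{\alpha}} = T H_{\bm{e}}^{\bm{\alpha}} T^{-1}, \qquad H_{\psi,\varphi}^{\bm{\alpha}} = T^{-1} H_{\bm{e}}^{\bm{\alpha}} T
\]
(the latter because $T^{\ast}=T$ under Definition 4.3) and their verbatim analogues for $A_{\bm{e}}^{\bm{\alpha}}$ and $B_{\bm{e}}^{\bm{\alpha}}$. Three ingredients enter: the factorisation itself; the fact that each $X\in\{H_{\bm{e}}^{\bm{\alpha}}, A_{\bm{e}}^{\bm{\alpha}}, B_{\bm{e}}^{\bm{\alpha}}\}$ is a closed operator in $\Hil$ (explicit for $H_{\bm{e}}^{\bm{\alpha}}$ in the text, and immediate for $A_{\bm{e}}^{\bm{\alpha}}, B_{\bm{e}}^{\bm{\alpha}}$ from their ONB expansion — or, for $B_{\bm{e}}^{\bm{\alpha}}$, from the identity $B_{\bm{e}}^{\bm{\alpha}}=(A_{\bm{e}}^{\overline{\bm{\alpha}}})^{\ast}$ which already exhibits it as an adjoint); and the closedness and injectivity of both $T$ and $T^{-1}$, granted by the standing assumption that $T$ is non-singular positive self-adjoint.

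For part (1) I would first check that $TD(X) \subseteq D(H_{\varphi,\psi}^{\bm{\alpha}})$. Given $y=Tx$ with $x\in D(X)$, injectivity of $T$ yields $T^{-1}y=x\in D(X)$, and the hypothesis $R(X)\subseteq D(T)$ puts $X(T^{-1}y)=Xx$ in $D(T)$, so $TXT^{-1}y$ is defined. Density of $TD(X)$ then shows $H_{\varphi,\psi}^{\bm{\alpha}}$ (resp.\ $A_{\varphi,\psi}^{\bm{\alpha}}$, $B_{\varphi,\psi}^{\bm{\alpha}}$) is densely defined. For closedness under the extra hypothesis that $T^{-1}$ is bounded, I would take $\{y_n\}\subset D(H_{\varphi,\psi}^{\bm{\alpha}})$ with $y_n\to y$ and $H_{\varphi,\psi}^{\bm{\alpha}}y_n\to z$, set $x_n=T^{-1}y_n$, and use boundedness of $T^{-1}$ to obtain $x_n\to T^{-1}y$ and $Xx_n=T^{-1}(H_{\varphi,\psi}^{\bm{\alpha}}y_n)\to T^{-1}z$; closedness of $X$ then yields $T^{-1}y\in D(X)$ with $XT^{-1}y=T^{-1}z\in D(T)$, whence $y\in D(H_{\varphi,\psi}^{\bm{\alpha}})$ and $H_{\varphi,\psi}^{\bm{\alpha}}y=z$.

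Part (2) is handled by running the same argument with $T$ and $T^{-1}$ exchanged, using $H_{\psi,\varphi}^{\bm{\alpha}}=T^{-1}XT$: for $y=T^{-1}x$ with $x\in D(X)\subseteq D(T^{-1})$ one has $Ty=x\in D(X)$ and $XTy\in R(X)\subseteq D(T^{-1})$, so $y\in D(H_{\psi,\varphi}^{\bm{\alpha}})$; the closedness step proceeds identically once $T$ is assumed bounded. The main obstacle — purely clerical — is the need to verify at each step that the iterated compositions $TXT^{-1}$ and $T^{-1}XT$ really are defined on the vector under consideration; the hypotheses $D(X)\cup R(X)\subseteq D(T)$, respectively $\subseteq D(T^{-1})$, are tailored precisely to secure this, and otherwise the proofs for $X=A_{\bm{e}}^{\bm{\alpha}}$ and $X=B_{\bm{e}}^{\bm{\alpha}}$ are word-for-word transcriptions of the Hamiltonian case.
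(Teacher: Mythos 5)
Your argument is correct and is exactly the route the paper intends: the published text simply states that the proof is analogous to that of Proposition 4.4, and your write-up is a faithful fleshing-out of that analogy (the domain check $TD(X)\subseteq D(TXT^{-1})$ via the hypothesis $D(X)\cup R(X)\subseteq D(T)$, followed by the same sequential closed-graph argument using boundedness of $T^{-1}$, resp.\ $T$, and closedness of $X$). The only point worth making explicit is that when $T^{-1}$ is bounded it is in fact everywhere defined (its domain $R(T)$ is dense and closed), which is needed to apply $T^{-1}$ to the limit vector $y$; the same remark applies to $T$ in part (2).
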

The proof is analogous to that of Proposition 4.4, and will not be repeated.
\par
\begin{cor} (1) Suppose that $T$ is bounded. Then $H_{\varphi,\psi}^{\bm{\alpha}}$, $A_{\varphi,\psi}^{\bm{\alpha}}$ and $B_{\varphi,\psi}^{\bm{\alpha}}$ are densely defined.
\par
(2) Suppose that $T^{-1}$ is bounded. Then $H_{\psi,\varphi}^{\bm{\alpha}}$, $A_{\psi,\varphi}^{\bm{\alpha}}$ and $B_{\psi,\varphi}^{\bm{\alpha}}$ are densely defined.\end{cor}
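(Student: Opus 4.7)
The plan is to reduce the statement to Proposition~4.5 by observing that, when $T$ (resp.\ $T^{-1}$) is bounded, the abstract hypotheses of Proposition~4.5 become automatic. Throughout, $T$ is a non-singular positive self-adjoint operator (as stipulated in Definition~4.3), and $X$ denotes any one of $H_{\bm{e}}^{\bm{\alpha}}$, $A_{\bm{e}}^{\bm{\alpha}}$, $B_{\bm{e}}^{\bm{\alpha}}$; each of these contains the linear span $\D_{\bm e}$ of $\F_{\bm e}$ in its domain and is therefore densely defined.

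For part (1), suppose $T$ is bounded. Then $D(T)=\Hil$, so the inclusion $D(X)\cup R(X)\subseteq D(T)$ required by Proposition~4.5(1) is trivial. Since $T$ is self-adjoint and non-singular, $\ker T=\{0\}$ and hence $\overline{R(T)}=(\ker T)^\perp=\Hil$; that is, $R(T)$ is dense. Because $T$ is continuous and $\overline{D(X)}=\Hil$, we have
$$
\overline{TD(X)}\supseteq T(\overline{D(X)})=T(\Hil)=R(T),
$$
so $\overline{TD(X)}\supseteq \overline{R(T)}=\Hil$, i.e.\ $TD(X)$ is dense in $\Hil$. Proposition~4.5(1) then yields that $H_{\varphi,\psi}^{\bm{\alpha}}$, $A_{\varphi,\psi}^{\bm{\alpha}}$ and $B_{\varphi,\psi}^{\bm{\alpha}}$ are densely defined.

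For part (2), suppose $T^{-1}$ is bounded. Then $D(T^{-1})=\Hil$, so $D(X)\cup R(X)\subseteq D(T^{-1})$ is trivially satisfied. Moreover $R(T^{-1})=D(T)$ is dense in $\Hil$ since $T$ is densely defined; the same continuity argument as above (now applied to the bounded operator $T^{-1}$) gives that $T^{-1}D(X)$ is dense in $\Hil$. Proposition~4.5(2) then yields that $H_{\psi,\varphi}^{\bm{\alpha}}$, $A_{\psi,\varphi}^{\bm{\alpha}}$ and $B_{\psi,\varphi}^{\bm{\alpha}}$ are densely defined.

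No serious obstacle arises: the corollary is simply the observation that two-sided invertibility of a bounded part of the polar decomposition trivializes the domain-inclusion hypotheses of Proposition~4.5, while the density of $TD(X)$ (resp.\ $T^{-1}D(X)$) is a soft consequence of the non-singularity and self-adjointness of $T$. Note that we do not claim closedness here, since the corresponding hypothesis in Proposition~4.5 (boundedness of $T^{-1}$ in (1), of $T$ in (2)) is not assumed symmetrically.
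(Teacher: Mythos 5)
Your proposal is correct and follows essentially the same route as the paper: both reduce the statement to Proposition~4.5, noting that boundedness of $T$ (resp.\ $T^{-1}$) makes the domain inclusions trivial and that non-singularity plus continuity gives density of $TD(X)$ (resp.\ $T^{-1}D(X)$). The only difference is that you spell out the density argument that the paper dismisses as ``easily shown,'' which is a welcome addition.
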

 \begin{proof}Since $T$ is bounded and non-singular, it is easily shown that $TD(H_{\bm{e}}^{\bm{\alpha}})$, $TD(A_{\bm{e}}^{\bm{\alpha}})$ and $TD(B_{\bm{e}}^{\bm{\alpha}})$ are dense in $\Hil$. Hence it follows from Proposition 4.5 that $H_{\varphi,\psi}^{\bm{\alpha}}$, $A_{\varphi,\psi}^{\bm{\alpha}}$ and $B_{\varphi,\psi}^{\bm{\alpha}}$ are densely defined. Similarly we can show (2).

 \end{proof}
We devote the last part of this section to construct an algebraic structure useful in the analysis of the operators considered so far. This is on the same line as the approach discussed in \cite{bgst} for non-self-adjoint position and momentum operators. Then we introduce now the notion of unbounded operator algebras. Let $\D$ be a dense subspace in a Hilbert space $\Hil$. We denote by $\Lc(\D)$ the set of all linear operators from $\D$ to $\D$ and put
\begin{eqnarray}
\Lc^\dagger (\D)
= \{ x \in \Lc(\D) ; D(X^\ast) \supset \D \; {\rm and } \; X^\ast \D \subset \D \}. \nonumber
\end{eqnarray}
Then $\Lc (\D)$ is an algebra equipped with the usual operations: $X+Y$, $\alpha X$ and $XY$, and $\Lc^\dagger (\D)$ is a $\ast$-algebra with involution $X^\dagger := X^\ast \lceil_\D$ (the restriction of $X^\ast$ to $\D$). A $\ast$-subalgebra of $\Lc^\dagger(\D)$ is called an $O^\ast$-algebra on $\D$ \cite{schm, ait_book}.

We assume that
\begin{eqnarray}
0 \leq \alpha_0 < \alpha_n <\alpha_{n+1} \;\;\; {\rm and } \;\;\; \alpha_{n+1} \leq \alpha_n +r , \;\; n=0,1, \ldots,
\end{eqnarray}
and put
\begin{eqnarray}
\D := \cap_{n \in N} D((H_{\bm{e}}^{\bm{\alpha}})^n). \nonumber
\end{eqnarray}
Then $\D$ is a dense subspace in $\Hil$, and we have the following\\
\par
\begin{lemma} (1) $H_{\bm{e}}^{\bm{\alpha}} \lceil_\D \in \Lc^\dagger (\D)$ and $(H_{\bm{e}}^{\bm{\alpha}} \lceil_\D )^\dagger = H_{\bm{e}}^{\bm{\alpha}} \lceil_\D$.
\par
(2) $A_{\bm{e}}^{\bm{\alpha}} \lceil_\D \in \Lc^\dagger(\D)$ and $(A_{\bm{e}}^{\bm{\alpha}} \lceil_\D)^\dagger = B_{\bm{e}}^{\bm{\alpha}}\lceil_\D$.\end{lemma}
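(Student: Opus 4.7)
\medskip

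\noindent\textbf{Plan.} The crux is to show that the dense subspace $\D=\bigcap_{n\in\mathbb{N}}D((H_{\bm{e}}^{\bm{\alpha}})^n)$ is stable under each of the three operators $H_{\bm{e}}^{\bm{\alpha}}$, $A_{\bm{e}}^{\bm{\alpha}}$, $B_{\bm{e}}^{\bm{\alpha}}$. Once this stability is in hand, both items of the lemma will follow at once from the adjoint relations already recorded in the section: $(H_{\bm{e}}^{\bm{\alpha}})^\ast=H_{\bm{e}}^{\bm{\alpha}}$ (self-adjointness, which holds because (4.5) forces $\bm{\alpha}\subset\mathbb{R}$) and $(A_{\bm{e}}^{\bm{\alpha}})^\ast=B_{\bm{e}}^{\bm{\alpha}}$ (since $\overline{\bm{\alpha}}=\bm{\alpha}$).

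Expanding a generic vector as $x=\sum_{k\geq 0}c_k e_k$, one has $x\in D((H_{\bm{e}}^{\bm{\alpha}})^n)$ iff $\sum_k|c_k|^2\alpha_k^{2n}<\infty$. Stability $H_{\bm{e}}^{\bm{\alpha}}\D\subseteq\D$ is immediate from the definition of $\D$. For the lowering operator, $A_{\bm{e}}^{\bm{\alpha}}x=\sum_{k\geq 0}c_{k+1}\alpha_{k+1}e_k$, and from the monotonicity $\alpha_k\leq\alpha_{k+1}$ in (4.5),
\[
\sum_{k\geq 0}|c_{k+1}|^2\alpha_{k+1}^{2}\,\alpha_k^{2n}\;\leq\;\sum_{k\geq 0}|c_{k+1}|^2\alpha_{k+1}^{2(n+1)}\;\leq\;\sum_{j\geq 0}|c_j|^2\alpha_j^{2(n+1)}<\infty,
\]
so $A_{\bm{e}}^{\bm{\alpha}}x\in\D$ whenever $x\in\D$.

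The delicate step is $B_{\bm{e}}^{\bm{\alpha}}\D\subseteq\D$, and here the second half of (4.5) plays an essential role. Since $B_{\bm{e}}^{\bm{\alpha}}x=\sum_{k\geq 0}c_k\alpha_{k+1}e_{k+1}$, membership in $D((H_{\bm{e}}^{\bm{\alpha}})^n)$ amounts to the finiteness of $\sum_{k\geq 0}|c_k|^2\alpha_{k+1}^{2(n+1)}$. The bound $\alpha_{k+1}\leq\alpha_k+r$ together with $\alpha_1>\alpha_0\geq 0$ yields the ratio estimate $\alpha_{k+1}\leq(1+r/\alpha_1)\alpha_k$ for every $k\geq 1$, hence $\alpha_{k+1}^{2(n+1)}\leq C_n\alpha_k^{2(n+1)}$ with $C_n:=(1+r/\alpha_1)^{2(n+1)}$. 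Isolating the $k=0$ term and invoking $x\in D((H_{\bm{e}}^{\bm{\alpha}})^{n+1})$ gives
\[
\sum_{k\geq 0}|c_k|^2\alpha_{k+1}^{2(n+1)}\;\leq\;|c_0|^2\alpha_1^{2(n+1)}+C_n\sum_{k\geq 0}|c_k|^2\alpha_k^{2(n+1)}\;<\;\infty.
\]
I expect this estimate to be the only genuine obstacle in the proof: without the growth bound in (4.5), the $\alpha_{k+1}$ could outrun any power of $\alpha_k$ and the inclusion $B_{\bm{e}}^{\bm{\alpha}}\D\subseteq\D$ would collapse.

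The dagger identities are then routine bookkeeping. For (1), self-adjointness gives $(H_{\bm{e}}^{\bm{\alpha}}\lceil_\D)^\ast\supseteq H_{\bm{e}}^{\bm{\alpha}}$, so $\D\subseteq D((H_{\bm{e}}^{\bm{\alpha}}\lceil_\D)^\ast)$, and invariance upgrades this to $(H_{\bm{e}}^{\bm{\alpha}}\lceil_\D)^\ast\D\subseteq\D$, proving $H_{\bm{e}}^{\bm{\alpha}}\lceil_\D\in\Lc^\dagger(\D)$ together with $(H_{\bm{e}}^{\bm{\alpha}}\lceil_\D)^\dagger=H_{\bm{e}}^{\bm{\alpha}}\lceil_\D$. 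For (2), the identity $(A_{\bm{e}}^{\bm{\alpha}})^\ast=B_{\bm{e}}^{\bm{\alpha}}$ gives $(A_{\bm{e}}^{\bm{\alpha}}\lceil_\D)^\ast\supseteq B_{\bm{e}}^{\bm{\alpha}}$, and the invariance $B_{\bm{e}}^{\bm{\alpha}}\D\subseteq\D$ then delivers both $A_{\bm{e}}^{\bm{\alpha}}\lceil_\D\in\Lc^\dagger(\D)$ and $(A_{\bm{e}}^{\bm{\alpha}}\lceil_\D)^\dagger=B_{\bm{e}}^{\bm{\alpha}}\lceil_\D$.
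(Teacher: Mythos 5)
Your proof is correct and follows essentially the same route as the paper's: characterize $\D$ by the summability of $\sum_k\alpha_k^{2n}|\ip{x}{e_k}|^2$ for all $n$, establish the invariance of $\D$ under $H_{\bm{e}}^{\bm{\alpha}}$, $A_{\bm{e}}^{\bm{\alpha}}$ and $B_{\bm{e}}^{\bm{\alpha}}$ via the monotonicity of $\{\alpha_n\}$ and the bound $\alpha_{n+1}\leq\alpha_n+r$, and then read off the dagger identities from the self-adjointness of $H_{\bm{e}}^{\bm{\alpha}}$ and from $(A_{\bm{e}}^{\bm{\alpha}})^\ast=B_{\bm{e}}^{\bm{\alpha}}$. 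The only cosmetic difference is in the raising-operator estimate, where you use the multiplicative bound $\alpha_{k+1}\leq(1+r/\alpha_1)\alpha_k$ while the paper bounds $\alpha_{k+1}^{2(n+1)}\leq(\alpha_k+r)^{2(n+1)}$ and sums; both rest on the same hypothesis (4.5) and are equally valid.
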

\begin{proof} Since
\begin{eqnarray}
H_{\bm{e}}^{\bm{\alpha}}x
&=& \sum_{k=0}^\infty \alpha_k <x,e_k>e_k , \;\;\; x \in D(H_{\bm{e}}^{\bm{\alpha}}) , \nonumber \\
A_{\bm{e}}^{\bm{\alpha}}x
&=& \sum_{k=0}^\infty \alpha_{k+1} <x,e_{k+1}>e_k , \;\;\; x \in D(A_{\bm{e}}^{\bm{\alpha}}) \nonumber \\
B_{\bm{e}}^{\bm{\alpha}}x
&=& \sum_{k=0}^\infty \alpha_{k+1} <x,e_k>e_{k+1}, \;\;\; x \in D(B_{\bm{e}}^{\bm{\alpha}}) , \nonumber
\end{eqnarray}
it follows from (4.5) that $D(H_{\bm{e}}^{\bm{\alpha}})=D(A_{\bm{e}}^{\bm{\alpha}})=D(B_{\bm{e}}^{\bm{\alpha}})$. The statement (1) follows from
\begin{eqnarray}
x\in \D = \cap_{n \in N}D((H_{\bm{e}}^{\bm{\alpha}})^n) \;\;\; {\rm if  \; and  \; only \; if} \; \sum_{k=0}^\infty \alpha_{2k}^{2n} |<x,e_k>|^2 < \infty , \;\; n \in N .
\end{eqnarray}
Furthermore, since
\begin{eqnarray}
(H_{\bm{e}}^{\bm{\alpha}})^n A_{\bm{e}}^{\bm{\alpha}} x
&=& \sum_{k=0}^\infty \alpha_k^n \alpha_{k+1}<x,e_{k+1}>e_k, \;\;\; x \in \D , \nonumber \\
(H_{\bm{e}}^{\bm{\alpha}})^n B_{\bm{e}}^{\bm{\alpha}} x
&=& \sum_{k=0}^\infty \alpha_{k+1}^{n+1}<x,e_k>e_{k+1}, \;\;\; x\in \D, \nonumber
\end{eqnarray}
it follows from (4.5) that
\begin{eqnarray}
\sum_{k=0}^\infty \alpha_k^{2n} \alpha_{k+1}^2 |<x,e_{k+1}>|^2
&\leq& \sum_{k=0}^\infty \alpha_{k+1}^{2(n+1)}|<x,e_{k+1}>|^2 \nonumber \\
&\leq& \sum_{k=0}^\infty \alpha_k^{2(n+1)}|<x,e_k>|^2 \nonumber
\end{eqnarray}
and
\begin{eqnarray}
\sum_{k=0}^\infty \alpha_{k+1}^{2(n+1)} |<x,e_k>|^2 \leq \sum_{k=0}^\infty (\alpha_k +r)^{2(n+1)}|<x,e_k>|^2 \nonumber
\end{eqnarray}
for all $x\in \D$ and $n \in N$, which implies by (4.6) that $A_{\bm{e}}^{\bm{\alpha}} \D \subseteq \D$ and $B_{\bm{e}}^{\bm{\alpha}} \subseteq \D$. Thus, (2) holds.

\end{proof}

By Lemma 4.7, we have the following\\
\par
\begin{prop} (1) Suppose $T\D \subseteq \D$ and $T\D$ is dense in $\Hil$. We denote by $\E$ the linear span of $T\D$. Then, $D(A_{\varphi,\psi}^{\bm{\alpha}}) \cap D(B_{\varphi,\psi}^{\bm{\alpha}}) \supseteq \E$, $A_{\varphi,\psi}^{\bm{\alpha}} \lceil_\E$, $B_{\varphi,\psi}^{\bm{\alpha}} \lceil_\E \in \Lc (\E)$ and
\begin{eqnarray}
\left( A_{\varphi,\psi}^{\bm{\alpha}} \lceil_\E \right)^m \left( B_{\varphi,\psi}^{\bm{\alpha}} \lceil_\E \right)^l
&=& T(A_{\bm{e}}^{\bm{\alpha}})^m (B_{\bm{e}}^{\bm{\alpha}})^l T^{-1} \lceil_\E , \nonumber \\
\left( B_{\varphi,\psi}^{\bm{\alpha}} \lceil_\E \right)^m \left( A_{\varphi,\psi}^{\bm{\alpha}} \lceil_\E \right)^l
&=& T(B_{\bm{e}}^{\bm{\alpha}})^m (A_{\bm{e}}^{\bm{\alpha}})^l T^{-1} \lceil_\E , \;\;\; m,l= 0,1, \ldots . \nonumber
\end{eqnarray}
(2) Suppose $T^{-1} \D \subseteq \D$ and $T^{-1} \D$ is dense in $\Hil$. We denote by $\E_-$ the linear span of $T^{-1} \D$. Then, $D(A_{\psi,\varphi}^{\bm{\alpha}}) \cap D(B_{\psi,\varphi}^{\bm{\alpha}}) \supseteq \E_-$, $A_{\psi,\varphi}^{\bm{\alpha}} \lceil_{\E_-}$, $B_{\psi,\varphi}^{\bm{\alpha}} \lceil_{\E_-} \in \Lc(\E_-)$ and
\begin{eqnarray}
\left( A_{\psi,\varphi}^{\bm{\alpha}} \lceil_{\E_-} \right)^m \left( B_{\psi,\varphi}^{\bm{\alpha}} \lceil_{\E_-} \right)^l
&=& T^{-1}(A_{\bm{e}}^{\bm{\alpha}})^m (B_{\bm{e}}^{\bm{\alpha}})^l T \lceil_{\E_-} , \nonumber \\
\left( B_{\psi,\varphi}^{\bm{\alpha}} \lceil_{\E_-} \right)^m \left( A_{\psi,\varphi}^{\bm{\alpha}} \lceil_{\E_-} \right)^l
&=& T^{-1}(B_{\bm{e}}^{\bm{\alpha}})^m (A_{\bm{e}}^{\bm{\alpha}})^l  \lceil_{\E_-} \;\;\; m,l= 0,1, \ldots . \nonumber
\end{eqnarray}
(3) Suppose that $T \D =\D$. Then the domains of the operators $A_{\varphi,\psi}^{\bm{\alpha}}$, $B_{\varphi,\psi}^{\bm{\alpha}}$, $A_{\psi,\varphi}^{\bm{\alpha}}$ and $B_{\psi,\varphi}^{\bm{\alpha}}$ contain $\D$ and the restrictions of these operators to $\D$ belong to $\Lc^\dagger (\D)$, and
\begin{eqnarray}
\left( A_{\varphi,\psi}^{\bm{\alpha}} \lceil_\D \right)^m \left( B_{\varphi,\psi}^{\bm{\alpha}} \lceil_\D \right)^l
&=& T(A_{\bm{e}}^{\bm{\alpha}})^m (B_{\bm{e}}^{\bm{\alpha}})^l T^{-1} \lceil_\D , \nonumber \\
\left( B_{\varphi,\psi}^{\bm{\alpha}} \lceil_\D \right)^m \left( A_{\varphi,\psi}^{\bm{\alpha}} \lceil_\D \right)^l
&=& T(B_{\bm{e}}^{\bm{\alpha}})^m (A_{\bm{e}}^{\bm{\alpha}})^l T^{-1} \lceil_\D , \nonumber \\
\left( A_{\psi,\varphi}^{\bm{\alpha}} \lceil_\D \right)^m \left( B_{\psi,\varphi}^{\bm{\alpha}} \lceil_\D \right)^l
&=& T^{-1}(A_{\bm{e}}^{\bm{\alpha}})^m (B_{\bm{e}}^{\bm{\alpha}})^l T \lceil_\D , \nonumber \\
\left( B_{\psi,\varphi}^{\bm{\alpha}} \lceil_\D \right)^m \left( A_{\psi,\varphi}^{\bm{\alpha}} \lceil_\D \right)^l
&=& T^{-1}(B_{\bm{e}}^{\bm{\alpha}})^m (A_{\bm{e}}^{\bm{\alpha}})^l  \lceil_\D \;\;\; m,l= 0,1, \ldots . \nonumber
\end{eqnarray}
(4) Let $\alpha_n = \sqrt{n}$, $n=0,1, \ldots $ and consider the following equations:
\begin{eqnarray}
A_{\varphi,\psi}^{\bm{\alpha}}B_{\varphi,\psi}^{\bm{\alpha}}-B_{\varphi,\psi}^{\bm{\alpha}}A_{\varphi,\psi}^{\bm{\alpha}} &=& \1, \\
A_{\psi,\varphi}^{\bm{\alpha}}B_{\psi,\varphi}^{\bm{\alpha}}-B_{\psi,\varphi}^{\bm{\alpha}}A_{\psi,\varphi}^{\bm{\alpha}} &=& \1.
\end{eqnarray}
Then, (4.7) (resp. (4.8)) holds on $\E$ (resp. $\E_-$) under the assumption in (1) (resp. (2)), and both (4.7) and (4.8) hold on $\D$ under the assumption in (3).\end{prop}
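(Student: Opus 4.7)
The entire statement is a transport-of-structure argument: by Definition~4.3 the constructing operator $T$ is positive self-adjoint (so $T^\ast=T$) and the very definitions give
\[
A_{\varphi,\psi}^{\bm{\alpha}}=TA_{\bm{e}}^{\bm{\alpha}}T^{-1},\ B_{\varphi,\psi}^{\bm{\alpha}}=TB_{\bm{e}}^{\bm{\alpha}}T^{-1},\ A_{\psi,\varphi}^{\bm{\alpha}}=T^{-1}A_{\bm{e}}^{\bm{\alpha}}T,\ B_{\psi,\varphi}^{\bm{\alpha}}=T^{-1}B_{\bm{e}}^{\bm{\alpha}}T.
\]
My plan is to reduce every assertion to an identity on the ONB side, where Lemma~4.7 already supplies the essential input, namely that $A_{\bm{e}}^{\bm{\alpha}}$ and $B_{\bm{e}}^{\bm{\alpha}}$ leave $\D$ invariant and satisfy $(A_{\bm{e}}^{\bm{\alpha}}\lceil_\D)^\dagger=B_{\bm{e}}^{\bm{\alpha}}\lceil_\D$. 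The remaining work is then entirely domain bookkeeping plus routine conjugation identities.

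For (1), I would take $x\in T\D$, write $x=Ty$ with $y\in\D$; by Lemma~4.7, $A_{\bm{e}}^{\bm{\alpha}}y\in\D$, and by hypothesis $T\D\subseteq\D$, hence $A_{\varphi,\psi}^{\bm{\alpha}}x=TA_{\bm{e}}^{\bm{\alpha}}y\in T\D\subseteq\E$. The same applies to $B_{\varphi,\psi}^{\bm{\alpha}}$, so by linearity $A_{\varphi,\psi}^{\bm{\alpha}}\lceil_\E,\,B_{\varphi,\psi}^{\bm{\alpha}}\lceil_\E\in\Lc(\E)$. Iterating, each output lies in $T\D$, so the next factor $TA_{\bm{e}}^{\bm{\alpha}}T^{-1}$ or $TB_{\bm{e}}^{\bm{\alpha}}T^{-1}$ can be applied with consecutive $T^{-1}T$ cancelling on elements of $\D$; an induction on $m+l$ then yields
\[
(A_{\varphi,\psi}^{\bm{\alpha}}\lceil_\E)^m(B_{\varphi,\psi}^{\bm{\alpha}}\lceil_\E)^l=T(A_{\bm{e}}^{\bm{\alpha}})^m(B_{\bm{e}}^{\bm{\alpha}})^lT^{-1}\lceil_\E,
\]
and the other word-ordering is identical. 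Part (2) is the literal mirror with $T^{-1}$ replacing $T$, and in (3) the assumption $T\D=\D$ (which forces also $T^{-1}\D=\D$) makes both (1) and (2) applicable with $\E=\E_-=\D$, producing all four composition formulas. To get membership in $\Lc^\dagger(\D)$ I would then use
\[
(A_{\varphi,\psi}^{\bm{\alpha}})^\ast\supseteq(T^{-1})^\ast(A_{\bm{e}}^{\bm{\alpha}})^\ast T^\ast=T^{-1}B_{\bm{e}}^{\bm{\alpha}}T=B_{\psi,\varphi}^{\bm{\alpha}},
\]
which sends $\D$ into $\D$ by (2), so $(A_{\varphi,\psi}^{\bm{\alpha}}\lceil_\D)^\dagger=B_{\psi,\varphi}^{\bm{\alpha}}\lceil_\D$; analogous identifications dispose of the remaining three restrictions.

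For (4), with $\alpha_n=\sqrt n$ the monotonicity condition (4.5) is satisfied with $r=1$, so Lemma~4.7 is available. A termwise evaluation on the basis gives $(A_{\bm{e}}^{\bm{\alpha}}B_{\bm{e}}^{\bm{\alpha}}-B_{\bm{e}}^{\bm{\alpha}}A_{\bm{e}}^{\bm{\alpha}})e_n=e_n$, and expanding an arbitrary $x\in\D$ in $\{e_n\}$ (with convergence of all required tails controlled by (4.6)) upgrades this to $A_{\bm{e}}^{\bm{\alpha}}B_{\bm{e}}^{\bm{\alpha}}-B_{\bm{e}}^{\bm{\alpha}}A_{\bm{e}}^{\bm{\alpha}}=\1$ on $\D$. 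Feeding the $(m,l)=(1,1)$ cases of the composition formulas from (1) and (2) into this identity yields $A_{\varphi,\psi}^{\bm{\alpha}}B_{\varphi,\psi}^{\bm{\alpha}}-B_{\varphi,\psi}^{\bm{\alpha}}A_{\varphi,\psi}^{\bm{\alpha}}=T\1 T^{-1}=\1$ on $\E$, and symmetrically (4.8) on $\E_-$; under the stronger hypothesis of (3) both identities hold simultaneously on $\D$.

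There is no genuinely hard step: the operator algebra collapses to a calculation on the ONB side, and the conjugation by $T$ is harmless thanks to the $T\D$-invariance. The only part that demands care is the domain bookkeeping for iterated compositions—each intermediate vector $(A_{\bm{e}}^{\bm{\alpha}})^i(B_{\bm{e}}^{\bm{\alpha}})^j T^{-1}x$ must land in $\D$—and this is exactly what Lemma~4.7 combined with the invariance hypothesis on $T\D$ (or $T^{-1}\D$) provides.
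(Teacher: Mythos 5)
Your proposal is correct and follows exactly the route of the paper's (much terser) proof: parts (1) and (2) are deduced from Lemma 4.7 via domain bookkeeping for the conjugation by $T$ (resp. $T^{-1}$), part (3) combines (1) and (2), and part (4) reduces to $A_{\bm{e}}^{\bm{\alpha}}B_{\bm{e}}^{\bm{\alpha}}-B_{\bm{e}}^{\bm{\alpha}}A_{\bm{e}}^{\bm{\alpha}}=\1$ on $\D$. You have merely filled in the details the authors leave implicit, including the identification of the adjoints needed for membership in $\Lc^\dagger(\D)$ (harmless here since the $\alpha_n$ are real under (4.5)).
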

 \begin{proof}The statement (1) and (2) follow immediately from Lemma 4.7. The statement (3) follows from (1) and (2), and (4) follows from $A_{\bm{e}}^{\bm{\alpha}}B_{\bm{e}}^{\bm{\alpha}}-B_{\bm{e}}^{\bm{\alpha}}A_{\bm{e}}^{\bm{\alpha}} = \1$ on $\D$.

 \end{proof}

{\bf Remark.} Algebraic operators of the generalized lowering and raising operators $A_{\varphi,\psi}^{\bm{\alpha}}$ and $B_{\varphi,\psi}^{\bm{\alpha}}$ (resp. $A_{\psi,\varphi}^{\bm{\alpha}}$ and $B_{\psi,\varphi}^{\bm{\alpha}}$) for $\F_\varphi$ (resp. $\F_\psi$) are determined by those of the usual lowering and raising operators $A_{\bm{e}}^{\bm{\alpha}}$ and $B_{\bm{e}}^{\bm{\alpha}}= (A_{\bm{e}}^{\bm{\alpha}})^\ast$. But, with respect to the algebraic operations of the operators for $\F_\varphi$ and the operator for $\F_\psi$ this is not true. For example, under the assumption in (3), for the multiplication of $A_{\psi,\varphi}^{\bm{\alpha}}$ and $B_{\varphi,\psi}^{\bm{\alpha}}$ we have,
\begin{eqnarray}
A_{\psi,\varphi}^{\bm{\alpha}}B_{\varphi,\psi}^{\bm{\alpha}}= T^{-1} A_{\bm{e}}^{\bm{\alpha}} T^2 B_{\bm{e}}^{\bm{\alpha}} T^{-1} \nonumber
\end{eqnarray}
on $\D$.

\section{Conclusions}\label{sectconl}

{As we have seen, generalized Riesz systems, discussed in this paper share with {\em true} Riesz bases a series of interesting properties, whose nature is independent of the fact that Riesz bases are actually a frame; this property is indeed missing in our framework, because of the unboundedness of the operators $T$, $T^{-1}$ that link a generalized Riesz system with an orthonormal basis of $\Hil$. The crucial assumption we have made in this paper is that a generalized Riesz system $\F_\varphi$ and its biorthogonal dual system $\F_\psi$ constitute a $\D$-quasi basis; that is the equality \eqref{eqn_quasi_bases} holds on $\D$; this puts on the stage sesquilinear forms of the type
$$
 \Omega_{\varphi,\psi}(x,y)=\sum_{n=0}^\infty\ip{x}{\varphi_n}\ip{\psi_n}{y}, \quad x,y \in \D.
$$
For $\D$-quasi bases they may exhibit a {\em singular} behavior. In this case, in fact, the equality $\Omega_{\varphi,\psi}(x,y)=\ip{x}{y}$ for all $x,y\in \D$ shows that  $\Omega_{\varphi,\psi}$ extends everywhere in $\Hil \times \Hil$ to the inner product of $\Hil$, but the convergence of the series defining it is not guaranteed in $\Hil \times \Hil$. This is just one of the themes that should be investigated in this respect, together with concrete physical systems where this singular aspect, and operators as those considered in Section 4, really matter and have a precise meaning. We hope to consider these questions in future papers.

}

\section*{Acknowledgements}
This work was partially supported by the University of Palermo, by the Gruppo Nazionale per la Fisica Matematica (GNFM) and by the Gruppo Nazionale per l'Analisi Matematica, la
Probabilit\`{a} e le loro Applicazioni (GNAMPA) of the Istituto
Nazionale di Alta Matematica (INdAM). The authors thank Prof. A. Inoue for his valuable comments on the paper.

\end{document}